\newcommand{\bL}{\mathbb{L}}
\newcommand{\bS}{\mathbb{S}}
\newcommand{\BB}{\mathbb{B}}
\newcommand{\CC}{\mathbb{C}}
\newcommand{\II}{{\mathbb{I}}}
\newcommand{\NN}{{\mathbb{N}}}
\newcommand{\RR}{{\mathbb{R}}}
\newcommand{\bn}{{\bigskip\noindent}}
\newcommand{\n}{{\noindent}}
\def\d{{\rm d}}
\newtheorem{prop}{Proposition}
\newtheorem{corollary}{Corollary}
\newtheorem{lem}{Lemma}
\newtheorem{thm}{Theorem}
\newtheorem{remark}{Remark}
\newtheorem{assumption}{Assumption}
\definecolor{uared}{rgb}{0.85, 0.0, 0.3}
\newcommand{\blind}{0}
\begin{document}

\def\spacingset#1{\renewcommand{\baselinestretch}%
	{#1}\small\normalsize} \spacingset{1}

\if0\blind
{
	\title{\bf Testing Symmetry for Bivariate Copulas using Bernstein Polynomials}
	
	\author{Guanjie Lyu\\
		Department of Mathematics and Statistics, University of Windsor, Canada\\
		
		M. Belalia\thanks{
			Corresponding author: \textit{Mohamed.Belalia @ uwindsor.ca}} \hspace{.2cm}\\
		Department of Mathematics and Statistics, University of Windsor, Canada}
	
	\maketitle
} \fi

\if1\blind
{
	\bigskip
	\bigskip
	\bigskip
	\begin{center}
		{\LARGE\bf Testing Symmetry for Bivariate Copulas using Bernstein Polynomials}
	\end{center}
	\medskip
} \fi

\bigskip
\noindent\rule{\textwidth}{0.8pt}
\begin{abstract}
	In this work, tests of symmetry for bivariate copulas are introduced and studied using empirical Bernstein copula process. Three statistics are proposed and their asymptotic properties are established. Besides, a multiplier bootstrap Bernstein version is investigated for implementation purpose. The simulation study demonstrated the superior performance of the Bernstein tests compared to tests based on empirical copulas. Furthermore, in real data applications, these tests consistently yielded similar conclusions across a diverse range of scenarios.
\end{abstract}
\noindent%
{\it Keywords:} Empirical Bernstein process; Multiplier bootstrap; Empirical copula process; Symmetry.

\noindent\rule{\textwidth}{0.8pt}
\spacingset{1.45} 

\section{Introduction}
\label{sec:introduction}

\n Asymmetric copulas have massive explorations and applications in recent years. They are powerful tools for capturing the asymmetrical dependence structure and have been applied in many fields, for instance,~\cite{Grimaldi2006} for flood frequency analysis,~\cite{Wu2014} in reliability modelling and~\cite{Zhang2018} for ocean data analysis. Meanwhile, more attention is given to testing and identifying the symmetric nature of a copula.~\cite{Genest2012} proposed tests based on empirical copula,~\cite{Bahraoui2018} used empirical copula characteristic function to construct the test, ~\cite{Jaser2020} developed a test by representing copula
as a mixture of two conditional distribution functions, \cite{Beare2020} studied a randomization procedure. For $d$-variate symmetry tests, the work of~\cite{Genest2012} was extended by~\cite{Harder2017}, and~\cite{Bahraoui&Quessy2021} investigated tests based on L\'evy measures.

In this work, tests of symmetry for bivariate copulas based on  empirical Bernstein copula process are proposed. Specifically, consider a random pair $(X, Y)$ with cumulative distribution function $F_{XY}(x, y)=\mathbb{P}(X\le x, Y\le y)$ and continuous margins $F_X$ and $F_Y$. According to~\cite{Sklar1959}, there exists a unique copula function $C$ such that
\begin{equation*}
	F_{XY}(x, y)=C\{F_X(x), F_Y(y)\}.
\end{equation*}	
To detect the symmetry of copula $C$, one would like to test the following hypotheses
\begin{equation}{\label{eq: hypotheses}}
	\begin{cases}
		\mathscr{H}_0: \forall (u,v) \in [0,1]^2 &  C(u, v)=C(v,u), \\
		& \text{versus}\\
		\mathscr{H}_1: \exists (u,v) \in [0,1]^2 & C(u, v)\ne C(v,u) .
	\end{cases}	 
\end{equation}	

\n  The symmetry property of bivariate copulas has intimate relation with the symmetry of corresponding random pairs and was discussed in~\cite{ Nelsen1993, Nelsen2006} and~\cite{Nelsen2007}. It was shown that $X$ and $Y$ are exchangeable, $i.e.$, $F_{XY}(x,y)=F_{XY}(y,x)$ if and only if $F_X=F_Y$, and $C(u, v)=C(v, u)$ for $(u, v)\in [0, 1]^2$. Specifically, for identically distributed margins, the symmetry structure of the copula can determine the exchangeability of the random variables $X$ and $Y$. Identifying the equality of margins is well-developed and can be investigated using Kolmogorov-Smirnov test or Cram\'{e}r-von Mises test. For a positively dependent survival data setting, see~\cite{Fujii1989} and for a high-dimensional data setting, see~\cite{Cousido-Rocha2019}. Together with the verification of equal margins, approaches to detecting the symmetry of the copula function would provide an effective way to decide on the exchangeability of two random variables. As the equality of margins was already discussed vigorously by many others, the contribution of detecting the symmetry of copula becomes more appealing for deciding the exchangeability of two random variables. Moreover, before fitting a specific copula model to the data, identifying the symmetric structure can assist us to choose an appropriate model, for example, Archimedean copulas are symmetric.

Empirical Bernstein copula has drawn attention in recent years due to free boundary bias properties and tractability of implementation. Indeed, one only needs to select the degree of the Bernstein polynomials during deployment. Theoretical properties of this estimator have been well-studied, see~\cite{Sancetta2004}, \cite{Jansen2012}, \cite{Belalia2017b} and~\cite{Segers2017}, among others. However, computational investigations of the empirical processes based on this estimator are barely explored. To fill this gap, we proposed a smooth version of multiplier bootstrap method for empirical Bernstein copula process with promising performance. 

The rest of this paper is structured as follows. In Section~\ref{sec: Tests}, based on the empirical Bernstein copula, an extension of symmetry test statistics in~\cite{Genest2012} is proposed and their asymptotic behaviours are examined. Section~\ref{sec:Multiplier} develops  the  empirical Bernstein copula process multiplier bootstrap and its large sample behaviour. Simulation studies are carried out in Section~\ref{sec:Sim}. Two real data applications are presented in Section~\ref{sec: realdata}. Some concluding remarks are given in Section~\ref{sec:conclusion}. Finally, the proofs are relegated in the \hyperlink{app}{Appendix} and the \textbf{R} code that was used in this article is available on \href{https://github.com/Tokkkk/EBC_SymmetryTest}{GitHub}.

\section{New testing procedures based on empirical Bernstein copula\label{sec: Tests}}

\subsection{Descriptive of the test statistics}

\n Let $(X_1, Y_1), \ldots, (X_n, Y_n)$ be a random sample from a bivariate distribution $F_{XY}$ with continuous margins $F_X$ and $F_Y$. Also, let $C$ be their associated copula. Unfortunately, this function is generally unknown, hence it has to be estimated. The empirical copula introduced by~\cite{Ruschendorf1976} is a natural nonparametric estimator of $ C $ and is given by 
\begin{equation}\label{eq:(1)}
	\widehat{C}_n(u, v)=\frac{1}{n}\sum_{i=1}^n\II\left(\widehat{U}_i\le u, \widehat{V}_i\le v\right),
\end{equation}
where $\widehat{U}_i=n^{-1}\sum_{j=1}^n\II\left(X_j\le X_i\right)$ and $\widehat{V}_i=n^{-1}\sum_{j=1}^n\II\left(Y_j\le Y_i\right)$ are the empirical distributions of the margins. 

The empirical copula is widely used for construction of nonparametric tests in the literature, such as, test of independence, goodness-of-fit test among others. However, it is not a continuous estimator for $C$, which mismatches the continuity of the copula function. To overcome this drawback, smoothed empirical copulas were developed, for example,~\cite{Morettin2010} introduced wavelet-smoothed empirical copula for time series data, ~\cite{Gijbels1990},~\cite{Fermanian2004},~\cite{Cheng2007},~\cite{Omelka2009} considered kernel-smoothed empirical copula and~\cite{Genest2017} proposed empirical checkerboard copula. Here, the empirical Bernstein copula is employed. This choice is motivated by (i) estimation based on Bernstein polynomials is known to be asymptotically bias free at boundary points (see, \cite{Leblanc2012b}, \cite{Jansen2012}, \cite{Belalia2016}) as compared to kernel based methods which suffer from excessive bias at or near to the boundary points. A good discussion about the boundary bias for kernel based methods can be found in~\cite{Cheng2007}. (ii) The empirical Bernstein copula is a polynomial, hence, it has all partial derivatives, which will be of highly important for building our multiplier bootstrap. Besides, the support of bivariate copula is $[0, 1]^2$ which meets the Bernstein polynomials assumption perfectly. 

The empirical Bernstein copula estimator of order $m$ is defined as 
\begin{align}\label{eq:(2)}
	\widehat{C}_{n, m}(u, v)&=\sum_{k=0}^m\sum_{\ell=0}^m\widehat{C}_n\left(k/m, \ell/m\right)P_{m,k}(u)P_{m,\ell}(v)\notag \\
	&=\frac{1}{n}\sum_{i=1}^n\sum_{k=0}^m\sum_{\ell=0}^m\II\left(\widehat{U}_i\le k/m, \widehat{V}_i\le \ell/m\right)P_{m, k}(u)P_{m, \ell}(v),
\end{align}
where $P_{m,k}(u)=\binom{m}{k}u^{k}(1-u)^{m-k}$ is the binomial probability mass function. Note that, $m$ is dependent on $n$, in particular, if the degree of Bernstein polynomial $m$ is equal to the sample size $n$, the estimator~\eqref{eq:(2)} turns out to be the empirical beta copula developed by~\cite{Segers2017}. Resampling procedures with the empirical beta copula are proposed and studied in~\cite{Kiriliouk2021}. A broad class of smooth, possibly data-adaptive nonparametric copula estimators are proposed in~\cite{Kojadinovic2022smooth} and~\cite{Kojadinovic2022stute}, which includes the empirical Bernstein and beta copula.


In the same spirit as the one presented in~\cite{Genest2012},  Kolmogorov-Smirnov and Cram\'{e}r-von Mises type statistics are proposed. Specifically, these statistics are built upon the empirical Bernstein copula and are presented as follows:
\begin{align}\label{eq:Statest}
	R_{n,m}&=\int_{0}^{1}\int_{0}^1\Big\{\widehat{C}_{n, m}(u, v)-\widehat{C}_{n,m}(v, u)\Big\}^2\dif u \dif v,\notag\\
	S_{n,m}&=\int_{0}^1\int_{0}^1\Big\{\widehat{C}_{n, m}(u, v)-\widehat{C}_{n, m}(v, u)\Big\}^2 \dif \widehat{C}_{n}(u, v),\notag\\
	T_{n,m}&=\sup_{(u,v)\in [0, 1]^2}\left|\widehat{C}_{n, m}(u, v)-\widehat{C}_{n, m}(v, u)\right|.
\end{align}
In what follows, the asymptotic behaviour of the three test statistics will be studied, including the asymptotic limit under both the null and alternative hypotheses.
\subsection{Asymptotic behaviour of the test statistics}

\n As it will be seen, the limits of the proposed test statistics are functional of the unknown underlying copula $ C $, therefore, some common assumptions are needed before going further.

\begin{assumption}\label{ass:1}
	Assume that first-order partial derivatives $\dot{C}_1(u, v)=\partial C(u, v)/\partial u, \dot{C}_{2}(u, v)=\partial C(u, v)/\partial v$ exist and are continuous, respectively, on the sets $(0, 1)\times [0, 1]$ and $[0, 1]\times (0, 1)$. 
\end{assumption}

\begin{assumption}\label{ass:2}
	Assume that second-order partial derivatives $\ddot{C}_{11}(u, v)=\partial^2 C(u, v)/\partial u^2, \ddot{C}_{21}(u, v)=\partial^2 C(u, v)/\partial u\partial v$ and $\ddot{C}_{22}(u, v)=\partial^2 C(u, v)/\partial v^2$ exist and are continuous, respectively, on the sets $(0, 1)\times [0, 1], (0, 1)\times (0, 1)$ and $[0, 1]\times (0, 1)$, and there exists a constant $K>0$ such that
	\begin{equation*}
		\left|\ddot{C}_{11}(u, v)\right|\le \frac{K}{u(1-u)},\quad 	\left|\ddot{C}_{21}(u, v)\right|\le K\min\left(\frac{1}{u(1-u)}, \frac{1}{v(1-v)}\right),\quad \left|\ddot{C}_{22}(u, v)\right|\le \frac{K}{v(1-v)}.
	\end{equation*} 
\end{assumption}

\n Let $\BB_{n, m}$ and $\bS_{n, m}$ denote, respectively, the empirical Bernstein copula process and symmetrised empirical Bernstein copula process,  namely, for all $(u, v)\in [0, 1]^2$
\begin{equation*}
	\BB_{n, m}(u, v)=\sqrt{n}\Big\{\widehat{C}_{n,m}(u, v)-C(u, v)\Big\}, \quad \bS_{n, m}(u, v)=\sqrt{n}\Big\{\widehat{C}_{n,m}(u, v)-\widehat{C}_{n, m}(v, u)\Big\}.
\end{equation*}
From the work of~\cite{Segers2017}, suppose $C$ satisfies Assumption~\ref{ass:1} and $m=cn^{\alpha}$ with $c>0, \alpha\geqslant 1$, then, in $\ell^{\infty}([0,1]^2)$ (the Banach space of all real-valued, bounded functions on $[0, 1]^2$, equipped with sup-norm),
\begin{equation}\label{eq:2022-08-25, 2:21PM}
	\BB_{n, m}(u, v)\rightsquigarrow \BB_C(u, v)=\mathbb{C}(u, v)-\dot{C}_1(u, v)\mathbb{C}(u,1)-\dot{C}_2(u, v)\mathbb{C}(1, v), \quad \text{as $n\rightarrow \infty$},
\end{equation}
where $\mathbb{C}$ is a $C$-Brownian bridge whose covariance function is 
\begin{equation*}
	\Gamma_{\mathbb{C}}(u, v, s, t)=C(\min(u, s), \min(v, t))-C(u, v)C(s, t),	
\end{equation*}
for $(u, v), (s, t)\in [0, 1]^2$ and $``\rightsquigarrow$'' stands for the weak convergence. Based on this result, the weak convergence of the symmetrised empirical Bernstein copula process is presented in the following theorem.

\begin{thm}\label{thm:1}
	Let $ C $ be a symmetric copula satisfying Assumption~\ref{ass:1}, in addition, if  $m=cn^{\alpha}$ with $c>0, \alpha\geqslant 1$, then as $n\rightarrow \infty$, $\bS_{n,m}$ converges weakly to a Gaussian process $\bS_C$ defined by
	\begin{equation*}
		\bS_C(u, v)=\mathbb{D}(u, v)-\dot{C}_1(u, v)\mathbb{D}(u,1)-\dot{C}_2(u,v)\mathbb{D}(1,v),
	\end{equation*}
	for all $(u, v)\in [0,1]^2$, where $\mathbb{D}$ is a $C$-Brownian bridge with covariance function given at each $(u, v) ,(s ,t)\in [0,1]^2$ by $\Gamma_{\mathbb{D}}(u, v, s, t)=2\{\Gamma_{\mathbb{C}}(u, v, s, t)-\Gamma_{\mathbb{C}}(u, v, t, s)\}$.
\end{thm}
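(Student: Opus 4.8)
The plan is to express the symmetrised process $\bS_{n,m}$ as the image of the empirical Bernstein copula process $\BB_{n,m}$ under a fixed bounded linear map, and then to invoke the continuous mapping theorem on the weak convergence~\eqref{eq:2022-08-25, 2:21PM}. Since $C$ is symmetric, $C(u,v)=C(v,u)$ for every $(u,v)\in[0,1]^2$, so adding and subtracting $C$ gives
\[
\bS_{n,m}(u,v)=\sqrt n\big\{\widehat{C}_{n,m}(u,v)-C(u,v)\big\}-\sqrt n\big\{\widehat{C}_{n,m}(v,u)-C(v,u)\big\}=\BB_{n,m}(u,v)-\BB_{n,m}(v,u).
\]
Introduce the operator $\phi\colon \ell^{\infty}([0,1]^2)\to\ell^{\infty}([0,1]^2)$ defined by $\phi(f)(u,v)=f(u,v)-f(v,u)$. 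Because $\|\phi(f)\|_\infty\le 2\|f\|_\infty$, the map $\phi$ is bounded and linear, hence continuous on all of $\ell^{\infty}([0,1]^2)$, and $\bS_{n,m}=\phi(\BB_{n,m})$.

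First I would apply the continuous mapping theorem: since $\BB_{n,m}\rightsquigarrow\BB_C$ in $\ell^{\infty}([0,1]^2)$ by~\eqref{eq:2022-08-25, 2:21PM} and $\phi$ is continuous, it follows that $\bS_{n,m}=\phi(\BB_{n,m})\rightsquigarrow\phi(\BB_C)$. It then remains to identify $\phi(\BB_C)$ with $\bS_C$. Differentiating the identity $C(u,v)=C(v,u)$ in $u$ and in $v$ yields $\dot{C}_1(u,v)=\dot{C}_2(v,u)$ and $\dot{C}_2(u,v)=\dot{C}_1(v,u)$, valid on the portion of $[0,1]^2$ where Assumption~\ref{ass:1} guarantees existence and continuity of these derivatives. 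Substituting these relations into $\phi(\BB_C)(u,v)=\BB_C(u,v)-\BB_C(v,u)$ and writing $\mathbb{D}(u,v)=\mathbb{C}(u,v)-\mathbb{C}(v,u)$, so that $\mathbb{D}(u,1)=\mathbb{C}(u,1)-\mathbb{C}(1,u)$ and $\mathbb{D}(1,v)=\mathbb{C}(1,v)-\mathbb{C}(v,1)$, the terms regroup exactly into
\[
\phi(\BB_C)(u,v)=\mathbb{D}(u,v)-\dot{C}_1(u,v)\mathbb{D}(u,1)-\dot{C}_2(u,v)\mathbb{D}(1,v)=\bS_C(u,v).
\]

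Finally I would check that $\mathbb{D}$ is a centered Gaussian process with the asserted covariance. As a continuous linear image of the $C$-Brownian bridge $\mathbb{C}$, $\mathbb{D}$ is Gaussian and centered, and bilinearity of the covariance gives
\[
\Cov\big(\mathbb{D}(u,v),\mathbb{D}(s,t)\big)=\Gamma_{\mathbb{C}}(u,v,s,t)-\Gamma_{\mathbb{C}}(u,v,t,s)-\Gamma_{\mathbb{C}}(v,u,s,t)+\Gamma_{\mathbb{C}}(v,u,t,s).
\]
Using $C(a,b)=C(b,a)$ together with the symmetry of $\min$, one verifies $\Gamma_{\mathbb{C}}(v,u,t,s)=\Gamma_{\mathbb{C}}(u,v,s,t)$ and $\Gamma_{\mathbb{C}}(v,u,s,t)=\Gamma_{\mathbb{C}}(u,v,t,s)$, so the right-hand side collapses to $2\{\Gamma_{\mathbb{C}}(u,v,s,t)-\Gamma_{\mathbb{C}}(u,v,t,s)\}=\Gamma_{\mathbb{D}}(u,v,s,t)$, as claimed.

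I do not anticipate a deep obstacle, since the argument is essentially a transfer of~\eqref{eq:2022-08-25, 2:21PM} through a continuous map. The two points requiring care are (i) justifying the partial-derivative identities $\dot{C}_1(v,u)=\dot{C}_2(u,v)$ from the symmetry of $C$ on the exact domain covered by Assumption~\ref{ass:1}, so that $\bS_C$ is well defined as an $\ell^{\infty}([0,1]^2)$-valued Gaussian element; and (ii) the bookkeeping in the covariance identity, where the symmetry relations among the four $\Gamma_{\mathbb{C}}$ terms must be tracked carefully. Both reduce to elementary manipulations.
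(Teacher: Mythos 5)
Your proof is correct and follows essentially the same route as the paper: write $\bS_{n,m}(u,v)=\BB_{n,m}(u,v)-\BB_{n,m}(v,u)$ using symmetry of $C$, then push the weak convergence $\BB_{n,m}\rightsquigarrow\BB_C$ through the continuous map $f\mapsto f-f^{\top}$. You in fact go further than the paper's proof, which stops at $\bS_C(u,v)=\BB_C(u,v)-\BB_C(v,u)$, by explicitly verifying (via $\dot{C}_1(u,v)=\dot{C}_2(v,u)$ and the bilinearity computation) that this limit coincides with the stated representation in terms of $\mathbb{D}$ and has covariance $\Gamma_{\mathbb{D}}$; both of these checks are carried out correctly.
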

\begin{proof}
	Using similar approach as in~\citet[Proof of Prosition 2]{Genest2012}, under the null hypothesis, one can write that,
	\begin{align*}
		\bS_{n,m}(u, v)& = \sqrt{n}\left\{\hat{C}_{n, m}(u, v)-\hat{C}_{n, m}(v, u)\right\}-       \sqrt{n}C(u, v)+\sqrt{n}C(v, u)\\
				&=\BB_{n, m}(u, v)-\BB_{n, m}(v, u).
	\end{align*}
	Further, it was shown in~\cite{Segers2017}, that $\BB_{n,m}\rightsquigarrow \BB_C$ as $n\rightarrow \infty$. Then, the desired result can be proven using continuous mapping theorem, namely, 
	\begin{equation*}
		\bS_{n,m}\rightsquigarrow \bS_C,
	\end{equation*}
	where $\bS_C(u,v)=\BB_C(u,v)-\BB_C(v,u)$.	 
\end{proof}

\bn Based on Theorem~\ref{thm:1} and the fact that the proposed test statistics in~\eqref{eq:Statest} are functional of $\bS_{n, m}$, their asymptotic behaviours under the null hypothesis can be established as follows.
\begin{thm}\label{thm:2}
	Let $ C $ be a symmetric copula satisfying Assumption~\ref{ass:1}, in addition, if  $m=cn^{\alpha}$ with $c>0, \alpha\geqslant 1$, then as $n \rightarrow \infty$,
	\begin{align*}
		nR_{n,m}=\int_{0}^1\int_{0}^1\big\{\bS_{n, m}(u, v)\big\}^2 \dif u \dif v&\rightsquigarrow \bL_R= \int_{0}^1\int_{0}^1\big\{\bS_C(u, v)\big\}^2 \dif u \dif v,\\
		nS_{n,m}=\int_{0}^1\int_0^1\big\{\bS_{n,m}(u,v)\big\}^2 \dif \widehat{C}_n(u,v)&\rightsquigarrow \bL_S=\int_{0}^1\int_{0}^1\big\{\bS_C(u,v)\big\}^2 \dif  C(u ,v),\\
		n^{1/2}T_{n,m}=\underset{(u,v)\in [0,1]^2}{\sup}|\bS_{n, m}(u, v)|&\rightsquigarrow \bL_T=\underset{(u,v)\in [0,1]^2}{\sup}\left|\bS_C(u, v)\right|.
	\end{align*}
\end{thm}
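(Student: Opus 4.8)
The plan is to derive all three limits from the single weak convergence $\bS_{n,m}\rightsquigarrow\bS_C$ in $\ell^\infty([0,1]^2)$ supplied by Theorem~\ref{thm:1}, combined with the fact that, under Assumption~\ref{ass:1}, $\bS_C$ has almost surely continuous sample paths: this holds because $\dot C_1,\dot C_2$ are continuous and the $C$-Brownian bridge $\DD$ (hence $\CC$) has continuous trajectories. For $R_{n,m}$ and $T_{n,m}$ the conclusion is a direct application of the continuous mapping theorem; for $S_{n,m}$ an extra step is needed since the integrator $\widehat C_n$ is itself random.

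First I would treat $nR_{n,m}$. Setting $\varphi_R(f)=\int_0^1\int_0^1 f(u,v)^2\,\dif u\,\dif v$ for $f\in\ell^\infty([0,1]^2)$, one has the identity $nR_{n,m}=\varphi_R(\bS_{n,m})$, and $\varphi_R$ is sup-norm continuous because $|\varphi_R(f)-\varphi_R(g)|\le\|f-g\|_\infty\,(\|f\|_\infty+\|g\|_\infty)$ (write $f^2-g^2=(f-g)(f+g)$). The continuous mapping theorem then yields $nR_{n,m}\rightsquigarrow\varphi_R(\bS_C)=\bL_R$. Similarly, $n^{1/2}T_{n,m}=\|\bS_{n,m}\|_\infty$ and the map $f\mapsto\|f\|_\infty$ is $1$-Lipschitz on $\ell^\infty([0,1]^2)$, so the same theorem gives $n^{1/2}T_{n,m}\rightsquigarrow\|\bS_C\|_\infty=\bL_T$.

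For $nS_{n,m}=\int_0^1\int_0^1\{\bS_{n,m}(u,v)\}^2\,\dif\widehat C_n(u,v)$ I would argue as follows. Since $\widehat C_n$ converges uniformly to $C$ in probability (a Glivenko--Cantelli property of the empirical copula), and $\bS_{n,m}\rightsquigarrow\bS_C$, Slutsky's lemma delivers the joint convergence $(\bS_{n,m},\widehat C_n)\rightsquigarrow(\bS_C,C)$ in $\ell^\infty([0,1]^2)\times\ell^\infty([0,1]^2)$. Consider $\psi(f,G)=\int_0^1\int_0^1 f(u,v)^2\,\dif G(u,v)$, defined for bounded $f$ and $G$ a bivariate distribution function on $[0,1]^2$. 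I would check that $\psi$ is continuous at each pair $(f,C)$ with $f$ continuous, by decomposing $\psi(f_n,G_n)-\psi(f,C)=\int(f_n^2-f^2)\,\dif G_n+\int f^2\,\dif(G_n-C)$: the first term is at most $\|f_n^2-f^2\|_\infty$ times the (bounded) total mass of $G_n$, which vanishes when $f_n\to f$ and $G_n\to C$ uniformly; the second term vanishes because uniform convergence of the distribution functions $G_n$ to $C$ forces convergence of all rectangle masses, hence weak convergence of the associated sub-probability measures, against which the bounded continuous function $f^2$ integrates continuously. As $\bS_C$ is a.s. continuous, the continuity set of $\psi$ has probability one under the law of $(\bS_C,C)$, so the extended continuous mapping theorem applies and gives $nS_{n,m}=\psi(\bS_{n,m},\widehat C_n)\rightsquigarrow\psi(\bS_C,C)=\bL_S$. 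An equivalent route is to invoke the almost sure representation theorem, replacing the weak convergences by $\|\bS_{n,m}-\bS_C\|_\infty\to0$ and $\|\widehat C_n-C\|_\infty\to0$ almost surely, and then bounding $|nS_{n,m}-\bL_S|$ directly.

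The main obstacle is exactly this handling of the random integrator in $S_{n,m}$: one must secure the joint convergence and, above all, verify that integration against the random measures $\widehat C_n$ is continuous at continuous integrands, which rests on passing from uniform convergence of bivariate distribution functions to weak convergence of the induced measures. The arguments for $R_{n,m}$ and $T_{n,m}$ are then routine continuous-mapping bookkeeping.
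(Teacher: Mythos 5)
Your argument is correct, and for $nR_{n,m}$ and $n^{1/2}T_{n,m}$ it coincides with the paper's proof (continuous mapping applied to $\bS_{n,m}\rightsquigarrow\bS_C$ from Theorem~\ref{thm:1}). Where you genuinely diverge is the Cram\'er--von Mises statistic $nS_{n,m}$ with its random integrator: the paper writes $nS_{n,m}=n^{1/2}\{\phi(A_{n,m},\widehat C_{n})-\phi(A,C)\}$ with $\phi(a,b)=\int a\,\dif b$, $A\equiv 0$, and invokes the Hadamard differentiability of $\phi$ tangentially to $C[0,1]^2\times D[0,1]^2$ (Lemma 4.3 of Carabarin-Aguirre and McVinish, 2010) together with the functional delta method, so the limit $\int A\,\dif\BB_C+\int\bS_C^2\,\dif C$ collapses to $\bL_S$ because $A\equiv0$; you instead establish joint convergence $(\bS_{n,m},\widehat C_n)\rightsquigarrow(\bS_C,C)$ via Slutsky (using uniform consistency of $\widehat C_n$) and prove directly that $(f,G)\mapsto\int f^2\,\dif G$ is continuous at $(f,C)$ for continuous $f$, passing from uniform convergence of the distribution functions (including total masses, since $\widehat C_n(1,1)=1=C(1,1)$) to weak convergence of the induced measures, then apply the extended continuous mapping theorem or an almost-sure representation. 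Both routes are valid and yield the same limit; the paper's delta-method argument is shorter once the differentiability lemma is cited and would also produce a nondegenerate correction term $\int\alpha\,\dif b$ in settings where the squared integrand does not vanish in the limit, whereas your argument is more elementary and self-contained, needing only Glivenko--Cantelli, Slutsky and a Helly-type continuity step. One small point you gloss over: continuity of the trajectories of $\bS_C$ on the closed square is not immediate from continuity of $\dot C_1,\dot C_2$, which Assumption~\ref{ass:1} guarantees only on $(0,1)\times[0,1]$ and $[0,1]\times(0,1)$; at the boundary one uses $0\le\dot C_1,\dot C_2\le1$ together with $\DD(u,1)\to0$ as $u\to0,1$ (and symmetrically), as in Segers (2012), so the claim stands but deserves that one-line justification since it underpins both your continuity-set argument for $\psi$ and, implicitly, the paper's tangential differentiability requirement.
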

\begin{proof}[Proof of Theorem~\ref{thm:2}]
	The proof is postponed  to the appendix.
\end{proof}
\n More generally, the asymptotic behaviours of the test statistics under the alternative hypothesis are given as follows.

\begin{thm}\label{thm:3}
	Let $ C $ be a copula satisfying Assumption~\ref{ass:1}, in addition, if  $m=cn^{\alpha}$ with $c>0, \alpha\geqslant 1$, then as $n \rightarrow \infty$,
	\begin{align*}
		R_{n,m}&\xrightarrow{a.s.} R_C=\int_{0}^{1}\int_{0}^1\Big\{C(u, v)-C(v, u)\Big\}^2 \dif u \dif v,\\
		S_{n,m}&\xrightarrow{a.s.} S_C=\int_{0}^1\int_{0}^1\Big\{C(u, v)-C(v, u)\Big\}^2\dif C(u, v),\\
		T_{n,m}&\xrightarrow{a.s.} T_C=\sup_{(u,v)\in [0, 1]^2}\left|C(u, v)-C(v, u)\right|.
	\end{align*}
	Specifically, under the alternative hypothesis, as $n\to \infty$, 
	\begin{align*}
		nR_{n,m}\xrightarrow{a.s.} \infty,\quad nS_{n,m}\xrightarrow{a.s.} \infty,\quad \sqrt{n}T_{n,m}\xrightarrow{a.s.} \infty,
	\end{align*}
	which ensures the consistency of the proposed test statistics.
\end{thm}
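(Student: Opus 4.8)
The plan is to reduce all three almost-sure limits to a single uniform law of large numbers, namely $\|\widehat{C}_{n,m}-C\|_\infty\to 0$ almost surely as $n\to\infty$ (note that $m=m(n)\to\infty$, since $n/m\to c$). To establish this I would write $B_m$ for the bivariate Bernstein operator, so that $\widehat{C}_{n,m}=B_m\widehat{C}_n$, and split
\[
\widehat{C}_{n,m}-C \;=\; B_m(\widehat{C}_n-C) \;+\; (B_m C-C).
\]
Since the weights $P_{m,k}(u)P_{m,\ell}(v)$ are nonnegative and sum to one, $B_m$ is a contraction for $\|\cdot\|_\infty$, so the first term is bounded in sup-norm by $\|\widehat{C}_n-C\|_\infty$, which tends to $0$ a.s.\ by the (uniform, almost sure) Glivenko--Cantelli property of the empirical copula; the second term tends to $0$ uniformly by the classical Bernstein/Weierstrass approximation theorem, $C$ being continuous on the compact $[0,1]^2$ with $m\to\infty$. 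In particular $\widehat{C}_{n,m}(u,v)-\widehat{C}_{n,m}(v,u)\to C(u,v)-C(v,u)$ uniformly, a.s.

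Given this, the three convergences follow by continuity arguments. For $R_{n,m}$: uniform a.s.\ convergence of the integrand together with finiteness of Lebesgue measure on $[0,1]^2$ yields $R_{n,m}\to R_C$ a.s. For $T_{n,m}$: the functional $g\mapsto\sup_{(u,v)}|g(u,v)-g(v,u)|$ is $2$-Lipschitz for $\|\cdot\|_\infty$, so $|T_{n,m}-T_C|\le 2\|\widehat{C}_{n,m}-C\|_\infty\to 0$ a.s. For $S_{n,m}$, which is the delicate case because the integrating measure $\widehat{C}_n$ also varies, I would split
\[
S_{n,m}-S_C \;=\; \int\!\Big(\{\widehat{C}_{n,m}(u,v)-\widehat{C}_{n,m}(v,u)\}^2-h(u,v)\Big)\d\widehat{C}_n \;+\; \Big(\!\int h\,\d\widehat{C}_n-\int h\,\d C\Big),
\]
with $h(u,v)=\{C(u,v)-C(v,u)\}^2$ fixed, bounded and continuous. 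The first summand is at most $\sup|\cdots|\to 0$ a.s.\ (since $\widehat{C}_n$ has total mass one); the second is handled by writing $\int h\,\d\widehat{C}_n=n^{-1}\sum_{i=1}^n h(\widehat{U}_i,\widehat{V}_i)$ and combining uniform continuity of $h$, the fact that $\max_i(|\widehat{U}_i-U_i|\vee|\widehat{V}_i-V_i|)\to 0$ a.s.\ (with $U_i=F_X(X_i)$, $V_i=F_Y(Y_i)$), and the strong law of large numbers; equivalently one may invoke the a.s.\ weak convergence of $\widehat{C}_n$ to $C$. Hence $S_{n,m}\to S_C$ a.s.

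For the consistency statements, note that under $\mathscr{H}_1$ the continuous function $(u,v)\mapsto C(u,v)-C(v,u)$ is not identically zero, so $R_C=\int\{C(u,v)-C(v,u)\}^2\,\d u\,\d v>0$ and $T_C=\sup|C(u,v)-C(v,u)|>0$ at once; for $S_C>0$ I would argue that the copula measure charges a subset of the open set $\{(u,v):C(u,v)\ne C(v,u)\}$. Since $R_{n,m}\to R_C>0$ a.s., eventually $R_{n,m}\ge R_C/2$, whence $nR_{n,m}\to\infty$ a.s., and likewise $nS_{n,m}\to\infty$ and $\sqrt{n}\,T_{n,m}\to\infty$ a.s.

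I expect the two points needing genuine care to be (i) the uniform almost-sure convergence $\|\widehat{C}_{n,m}-C\|_\infty\to 0$, which rests on the Glivenko--Cantelli theorem for the rank-based estimator $\widehat{C}_n$ and on $m(n)\to\infty$, and (ii) the double approximation in $S_{n,m}$, where both the integrand and the integrating measure depend on $n$; establishing $S_C>0$ under $\mathscr{H}_1$ also requires a short support argument.
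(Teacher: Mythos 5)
Your argument is correct and follows essentially the same route as the paper: uniform almost-sure consistency of $\widehat{C}_{n,m}$ (which the paper imports from the cited literature and you rederive via the contraction property of the Bernstein operator plus Weierstrass approximation), continuous mapping for $R_{n,m}$ and $T_{n,m}$, and the identical two-term decomposition of $S_{n,m}-S_C$ separating the change of integrand from the change of integrating measure. Your closing observation that $S_C>0$ under the alternative still requires a support argument for the copula measure is a point the paper passes over silently, so flagging it is a minor refinement rather than a genuine divergence.
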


\begin{proof}[Proof of Theorem~\ref{thm:3}]
	The proof is postponed  to the appendix.
\end{proof}

\begin{remark}
Note that,  the convergence in  Theorem~\ref{thm:1}-\ref{thm:3} also hold under Assumption~\ref{ass:1}-\ref{ass:2} with a weaker condition on $m$, that is, $m=cn^{\alpha}$ with $c>0, \alpha > 3/4$.
\end{remark}

\begin{lem}\label{prop:2023-08-24, 4:21PM}
	Suppose that $C$ satisfies Assumption~\ref{ass:1}-\ref{ass:2}, if $m=cn^{\alpha}$ with $c>0, \alpha > 3/4$, then, in $\ell^{\infty}([0, 1]^2)$, Equation~\eqref{eq:2022-08-25, 2:21PM} holds.
\end{lem}

\begin{proof}[Proof of Lemma~\ref{prop:2023-08-24, 4:21PM}]
	The proof is postponed  to the appendix.
\end{proof}

\section{Multiplier bootstrap\label{sec:Multiplier}}

\n As shown in the preceding section, the asymptotic limits of the statistics are functions of the unknown copula $C$, therefore it is impossible to compute valid P-values using standard Monte Carlo procedure directly. To overcome this issue, different bootstrap methods were developed, see for example~\cite{Genest2009} and \cite{Belalia2017b} among others. However, these approaches are computationally intensive, especially when the sample size is large. The multiplier bootstrap is an alternative methodology that mitigates the burden of computation by estimating the replicates of statistics under null hypothesis straightly. For the application of this method, the reader is directed to~\cite{Remillard2009},~\cite{Genest2012}, ~\cite{Bahraoui2018} and references therein. 

Following the multiplier procedure described in~\cite{Harder2017}, let $H\in \mathbb{N}$ and for each $h\in \{1, \ldots, H\}$, let $\bm{\xi}_n^{(h)}=\left(\xi_1^{(h)},\ldots,\xi_n^{(h)}\right)$ be a vector of independent random variables with unit mean and unit variance (taking $\xi_i^{(h)}\sim \rm{Exp}(1)$, $i=1,\ldots, n$).  Set  
\begin{equation*}
	C_{n,m}(u,v)=\frac{1}{n}\sum_{i=1}^{n}\sum_{k=0}^m\sum_{\ell=0}^m\II\left(U_i\le \frac{k}{m}, V_i\le \frac{\ell}{m}\right)P_{m,k}(u)P_{m,\ell}(v),
\end{equation*} 
where $\left(U_i, V_i\right)=\left(F_X\left(X_i\right), F_Y\left(Y_i\right)\right)$. Denote the sample mean of $\bm{\xi}_n^{(h)}$ by $\bar{\xi}_n^{(h)}$, then for each $h$, define
\begin{align*}
	\widetilde{\BB}_{n,m}(u, v)&=n^{1/2}\big\{C_{n,m}(u, v)-C(u, v)\big\}\\
	&=n^{1/2}\Bigg\{\frac{1}{n}\sum_{i=1}^n\sum_{k=0}^m\sum_{\ell=0}^m\bigg\{\II\left(U_i\le \frac{k}{m}, V_i\le \frac{\ell}{m}\right)-C(u, v)\bigg\} P_{m,k}(u)P_{m,\ell}(v)\Bigg\},
\end{align*}
and
\begin{align}\label{eq:2022-08-25, 2:08PM}
	\overline{\BB}_{n,m}^{(h)}(u, v) &=n^{1/2}\Bigg\{\frac{1}{n}\sum_{i=1}^n\sum_{k=0}^m\sum_{\ell=0}^m\left(\xi^{(h)}_i-\bar{\xi}_n^{(h)}\right)\II\left(\widehat{U}_i\le\frac{k}{m}, \widehat{V}_i\le \frac{\ell}{m}\right)P_{m,k}(u)P_{m,\ell}(v)\Bigg\}.
\end{align}
One can observe that $\widetilde{\BB}_{n,m}$ is the empirical Bernstein copula process when margins are known.  The following proposition states that $\widetilde{\BB}_{n,m}$ converges weakly to the $C$-Brownian bridge $\CC$ mentioned in~\eqref{eq:2022-08-25, 2:21PM}, moreover, $\overline{\BB}_{n,m}^{(h)}$ is a valid replicate of $\widetilde{\BB}_{n, m}$.

\begin{prop}\label{prop:1}
	Let $ C $ be a symmetric copula  satisfying Assumption~\ref{ass:1}-\ref{ass:2}, if 	$m=cn^{\alpha}$ with $c>0, \alpha > 3/4$,	then for $(u, v)\in [0, 1]^2$,
	\begin{enumerate}
		\item we have
		\begin{equation*}
			\widetilde{\BB}_{n,m}(u, v)
			\rightsquigarrow \CC(u ,v),
		\end{equation*}
		\item also, for each $h\in \{1, \ldots, H\}$,
		\begin{equation}\label{eq:(4)}
			\overline{\BB}_{n,m}^{(h)}(u, v) \rightsquigarrow \CC(u ,v).
		\end{equation}	
	\end{enumerate}
\end{prop}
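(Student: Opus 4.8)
The plan is to observe that \emph{both} $\widetilde{\BB}_{n,m}$ and $\overline{\BB}^{(h)}_{n,m}$ are nothing but the bivariate Bernstein operator
$(B_m f)(u,v)=\sum_{k=0}^m\sum_{\ell=0}^m f(k/m,\ell/m)P_{m,k}(u)P_{m,\ell}(v)$
applied to an empirical-type process that converges weakly to the $C$-Brownian bridge $\CC$, and then to transfer the convergence through $B_m$. The engine is the following ``approximate identity'' property of $B_m$, essentially contained in~\cite{Segers2017}: if $\{Z_n\}$ is asymptotically tight in $\ell^\infty([0,1]^2)$ with a weak limit whose sample paths are almost surely continuous, and $m=m_n\to\infty$, then $\|B_mZ_n-Z_n\|_\infty=o_{\PP}(1)$. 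I would prove this from the pointwise bound $|B_mZ_n(u,v)-Z_n(u,v)|\le\sum_{k,\ell}|Z_n(k/m,\ell/m)-Z_n(u,v)|P_{m,k}(u)P_{m,\ell}(v)$ by splitting the sum according to whether $\|(k/m,\ell/m)-(u,v)\|\le\delta$ or not: the near part is controlled by the oscillation modulus of $Z_n$, which is $o_{\PP}(1)$ as $\delta\downarrow0$ uniformly in $n$ by asymptotic equicontinuity; the far part is bounded by $2\|Z_n\|_\infty\sum_{\|(k/m,\ell/m)-(u,v)\|>\delta}P_{m,k}(u)P_{m,\ell}(v)=O\big(\|Z_n\|_\infty(m\delta^2)^{-1}\big)$ by Chebyshev's inequality for the binomial weights, which vanishes for fixed $\delta$ since $\|Z_n\|_\infty=O_{\PP}(1)$.

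For part~(1), linearity of $B_m$ gives the exact decomposition $\widetilde{\BB}_{n,m}=B_m\alpha_n+\sqrt n\,(B_mC-C)$, where $\alpha_n(u,v)=n^{-1/2}\sum_{i=1}^n\{\II(U_i\le u,V_i\le v)-C(u,v)\}$ is the ordinary (known-margins) empirical process and $B_m$ acts on the grid values $\alpha_n(k/m,\ell/m)$. By the classical multivariate Donsker theorem $\alpha_n\rightsquigarrow\CC$ in $\ell^\infty([0,1]^2)$, so $\alpha_n$ satisfies the hypotheses of the approximate-identity property and $\|B_m\alpha_n-\alpha_n\|_\infty=o_{\PP}(1)$. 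For the deterministic remainder, a second-order Taylor expansion of $C$ together with the derivative bounds of Assumption~\ref{ass:2} gives $\|B_mC-C\|_\infty=O(m^{-1})$ (cf.~\cite[Proposition 3.4]{Segers2017}), so $\sqrt n\,\|B_mC-C\|_\infty\to0$ under the stated growth condition on $m$. Hence $\widetilde{\BB}_{n,m}=\alpha_n+o_{\PP}(1)\rightsquigarrow\CC$.

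For part~(2), the same bookkeeping yields $\overline{\BB}^{(h)}_{n,m}=B_m\widehat\alpha^{(h)}_n$, where $\widehat\alpha^{(h)}_n(u,v)=n^{-1/2}\sum_{i=1}^n(\xi_i^{(h)}-\bar{\xi}_n^{(h)})\,\II(\widehat U_i\le u,\widehat V_i\le v)$ is the rank-based multiplier empirical process, so it suffices to prove $\widehat\alpha^{(h)}_n\rightsquigarrow\CC$ and apply the approximate-identity property once more. I would write $\widehat\alpha^{(h)}_n=\widetilde\alpha^{(h)}_n+(\widehat\alpha^{(h)}_n-\widetilde\alpha^{(h)}_n)$ with $\widetilde\alpha^{(h)}_n(u,v)=n^{-1/2}\sum_i(\xi_i^{(h)}-\bar{\xi}_n^{(h)})\,\II(U_i\le u,V_i\le v)$: the conditional multiplier central limit theorem for the Donsker class of indicators of lower-left rectangles $[0,u]\times[0,v]$ (see~\cite{Remillard2009},~\cite{Genest2012},~\cite{Segers2012},~\cite{Harder2017}) gives $\widetilde\alpha^{(h)}_n\rightsquigarrow\CC$, also jointly over $h=1,\dots,H$ with independent copies of $\CC$; and the rank-replacement estimate $\|\widehat\alpha^{(h)}_n-\widetilde\alpha^{(h)}_n\|_\infty=o_{\PP}(1)$ holds because the summands disagree only on the $O_{\PP}(\sqrt n)$ indices whose marginal ranks straddle $\lceil nu\rceil$ or $\lceil nv\rceil$, and the associated sum of centred $O_{\PP}(1)$ multipliers is $o_{\PP}(1)$ uniformly in $(u,v)$ by the oscillation bounds for the multiplier empirical process. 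Combining these, $\overline{\BB}^{(h)}_{n,m}=\widehat\alpha^{(h)}_n+o_{\PP}(1)=\widetilde\alpha^{(h)}_n+o_{\PP}(1)\rightsquigarrow\CC$.

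The main obstacle is the uniform rank-replacement step $\|\widehat\alpha^{(h)}_n-\widetilde\alpha^{(h)}_n\|_\infty=o_{\PP}(1)$: since the ``straddling'' index set moves with $(u,v)$, this genuinely requires an empirical-process (equicontinuity/chaining) argument rather than a pointwise estimate, and it is precisely here that Assumption~\ref{ass:1} and the fluctuations of the empirical quantile function enter; alternatively one may invoke the fact that $\widehat\alpha^{(h)}_n\rightsquigarrow\CC$ is already available in the literature cited above. By contrast, the approximate-identity lemma, although it does the real work in both parts, is routine once the Chebyshev bound for the binomial weights and the asymptotic equicontinuity of the underlying processes are in hand.
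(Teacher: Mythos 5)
Your argument is essentially correct, but it takes a genuinely different route from the paper's. The paper proves both parts by asymptotic equivalence to the unsmoothed processes: it quotes \cite{Remillard2009} for the convergence of the known-margins empirical process $\CC_n$ and of the rank-based multiplier process $\overline{\CC}^{(h)}_n$ to $\CC$, and then argues that the Bernstein smoothing is negligible via the almost-sure uniform rates $\|C_{n,m}-C_n\|_\infty=O\big(n^{-1/2}(\log\log n)^{1/2}\big)$ and $\|\widehat C_{n,m}-\widehat C_n\|_\infty=O\big(n^{-1/2}(\log\log n)^{1/2}\big)$ drawn from \cite{Deheuvels1979} and \cite{Jansen2012}. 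You instead keep the Bernstein operator explicit, writing $\widetilde\BB_{n,m}=B_m\alpha_n+\sqrt n\,(B_mC-C)$ and $\overline\BB^{(h)}_{n,m}=B_m\widehat\alpha^{(h)}_n$, and remove the smoothing error with an approximate-identity lemma (asymptotic equicontinuity of the underlying process plus the Chebyshev bound for the binomial weights). This buys real rigour: your oscillation argument actually delivers $\|B_mZ_n-Z_n\|_\infty=o_{\PP}(1)$, whereas the paper's law-of-the-iterated-logarithm bounds, taken literally, only give $\sqrt n\,\|C_{n,m}-C_n\|_\infty=O\big((\log\log n)^{1/2}\big)$ after rescaling, and for part (2) an unweighted bound on $\|\widehat C_{n,m}-\widehat C_n\|_\infty$ does not by itself control the multiplier-weighted difference; viewing $\overline\BB^{(h)}_{n,m}$ as the Bernstein smoothing of the multiplier process itself and applying the lemma is the cleaner way to close that step. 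For $\widehat\alpha^{(h)}_n\rightsquigarrow\CC$ you correctly note that one may simply cite the multiplier results of \cite{Remillard2009}, which is exactly what the paper does, so your rank-replacement discussion is optional rather than necessary.

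One caveat concerns the drift term in part (1). Under Assumption~\ref{ass:2} the uniform Bernstein approximation error $\|B_mC-C\|_\infty$ is of exact order $m^{-1}$ for a generic copula (the leading term involves $u(1-u)\ddot C_{11}$ and $v(1-v)\ddot C_{22}$, which do not vanish in general), so your conclusion that $\sqrt n\,\|B_mC-C\|_\infty\to0$ is justified only when $n/m^2\to 0$; if $n/m^2\to c>0$ this term is merely $O(1)$ and would contribute a nonvanishing deterministic drift. So either restrict to $c=0$ or state the bias contribution explicitly. This is a defect of the statement's growth condition rather than of your method, and the paper's own proof glosses over the same point even more coarsely, but as written your sentence claims more than the bound supports when $c>0$.
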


\begin{proof}[Proof of Proposition~\ref{prop:1}]
	The proof is postponed  to the appendix.
\end{proof}

\n Hence the bootstrap replicates of $\BB_{n, m}(u, v)$ for $h\in \{1, \ldots, H\}$ are
\begin{equation}\label{eq:2023-05-10, 3:30PM}
	\BB_{n,m}^{(h)}(u, v)=\overline{\BB}^{(h)}_{n,m}(u, v)-\frac{\partial \widehat{C}_{n,m}(u, v)}{\partial u}\overline{\BB}^{(h)}_{n,m}(u, 1)-\frac{\partial \widehat{C}_{n,m}(u, v)}{\partial v}\overline{\BB}^{(h)}_{n,m}(1, v).
\end{equation}
The partial derivatives of the empirical Bernstein copula were studied in~\cite{Janssen2016}, more specifically, 
\begin{align*}
	\frac{\partial \widehat{C}_{n,m}(u, v)}{\partial u}&=m\sum_{k=0}^{m-1}\sum_{\ell=0}^m\bigg\{\widehat{C}_n\left(\frac{k+1}{m}, \frac{\ell}{m}\right)-\widehat{C}_n\left(\frac{k}{m}, \frac{\ell}{m}\right)\bigg\}P_{m-1, k}(u)P_{m, \ell}(v),\\
	\frac{\partial \widehat{C}_{n,m}(u, v)}{\partial v}&=m\sum_{k=0}^{m}\sum_{\ell=0}^{m-1}\bigg\{\widehat{C}_n\left(\frac{k}{m}, \frac{\ell+1}{m}\right)-\widehat{C}_n\left(\frac{k}{m}, \frac{\ell}{m}\right)\bigg\}P_{m, k}(u)P_{m-1, \ell}(v).
\end{align*}
Unlike the empirical copula, the partial derivatives of the empirical Bernstein copula can be calculated directly without any further approximations. The following proposition provides the uniform consistency of partial derivatives.

\begin{prop}\label{prop:2}
	Let $ C $ be a copula  satisfying Assumption~\ref{ass:1}-\ref{ass:2}, if $m=cn^{\alpha}$ with $c>0, \alpha > 3/4$ and $b_n=\kappa n^{-\beta}$ with $\kappa>0, \beta<{\alpha}/{2}$,
	then 
	\begin{align*}
		\sup_{\substack{v\in [0, 1],\\u\in [b_n, 1-b_n]}}\left|	\frac{\partial \widehat{C}_{n,m}(u, v)}{\partial u}-\dot{C}_1(u, v)\right|&=O\left(m^{1/2}n^{-1/2}(\log\log n)^{1/2}\right),\\
		\sup_{\substack{u\in [0, 1],\\v\in [b_n, 1-b_n]}}\left|	\frac{\partial \widehat{C}_{n,m}(u, v)}{\partial v}-\dot{C}_2(u, v)\right|&=O\left(m^{1/2}n^{-1/2}(\log\log n)^{1/2}\right),
	\end{align*}
	almost surely as $n\to \infty$.
\end{prop}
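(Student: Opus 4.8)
\textbf{Proof proposal for Proposition~\ref{prop:2}.}

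The plan is to reduce the uniform control of the partial derivatives of $\widehat{C}_{n,m}$ to two separate pieces: a deterministic (bias) term comparing the partial derivative of the Bernstein-smoothed copula $C_m$ (obtained by applying the Bernstein operator to the true $C$) with $\dot C_1$, and a stochastic term comparing the partial derivative of $\widehat C_{n,m}$ with that of $C_m$. For the first piece, the explicit difference-quotient formula for $\partial \widehat{C}_{n,m}/\partial u$ quoted above shows that, with $\widehat C_n$ replaced by $C$, the quantity $m\{C((k+1)/m,\ell/m)-C(k/m,\ell/m)\}$ is exactly $\dot C_1(\theta_{k},\ell/m)$ for some $\theta_k\in(k/m,(k+1)/m)$ by the mean value theorem; since $\sum_k P_{m-1,k}(u)=1$ and $\sum_\ell P_{m,\ell}(v)=1$, the Bernstein-smoothed derivative is a convex average of values of $\dot C_1$ near $(u,v)$, and the Lipschitz property of $\dot C_1$ on $(0,1)\times[0,1]$ (noted after Assumption~\ref{ass:2}, via \cite[Lemma 4.3]{Segers2012}) together with standard Bernstein concentration of $P_{m-1,k}(u)$ around $k\approx mu$ gives a bias of order $O(m^{-1/2})$ uniformly (the $u(1-u)$ weighting in Assumption~\ref{ass:2} is what makes this uniform up to the boundary). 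This is dominated by $O(m^{1/2}n^{-1/2}(\log\log n)^{1/2})$ precisely when $n/m^2\to c\ge 0$, which is where that rate assumption is used.

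For the stochastic piece, I would write
\begin{equation*}
	\frac{\partial \widehat{C}_{n,m}(u,v)}{\partial u}-\frac{\partial C_{m}(u,v)}{\partial u}
	=m\sum_{k=0}^{m-1}\sum_{\ell=0}^{m}\Delta_{k,\ell}\,P_{m-1,k}(u)P_{m,\ell}(v),
\end{equation*}
where $\Delta_{k,\ell}=\{(\widehat C_n-C)((k+1)/m,\ell/m)-(\widehat C_n-C)(k/m,\ell/m)\}$. The key input is the almost-sure modulus-of-continuity / oscillation bound for the empirical copula process $\sqrt n(\widehat C_n-C)$, namely that increments of $\widehat C_n-C$ over an interval of length $1/m$ in the first coordinate are $O(n^{-1/2}(\log\log n)^{1/2}\{(1/m)\log(\cdot)\}^{1/2})$-type — more precisely I would invoke the Stute-type oscillation result for the (uniform) empirical process, or the bound used in \cite{Janssen2016}, giving $\max_{k,\ell}|\Delta_{k,\ell}| = O(n^{-1/2}m^{-1/2}(\log\log n)^{1/2})$ almost surely. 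Multiplying by the leading factor $m$ and using $\sum_k P_{m-1,k}(u)\sum_\ell P_{m,\ell}(v)=1$ yields exactly the claimed rate $O(m^{1/2}n^{-1/2}(\log\log n)^{1/2})$, uniformly in $(u,v)$. The argument for $\partial/\partial v$ is symmetric.

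The main obstacle I anticipate is making the two controls genuinely uniform over the \emph{open} square $(0,1)^2$, i.e.\ handling the behaviour near the boundary where $\dot C_1$ need not extend continuously (it is only assumed continuous on $(0,1)\times[0,1]$) and where the empirical-copula increments are least well behaved. This is exactly the role of the second-order bounds in Assumption~\ref{ass:2}: the $1/\{u(1-u)\}$-type control on $\ddot C_{11}$ and $\ddot C_{21}$ lets one bound the bias uniformly after splitting off a shrinking neighbourhood of $\{u\in\{0,1\}\}$, and the weighted oscillation bounds for the empirical process (with weight vanishing at the boundary) absorb the stochastic part there. Everything else — the mean value theorem, the telescoping, the convex-average bookkeeping with the $P_{m,k}$'s — is routine; I would cite \cite{Janssen2016} and \cite{Segers2012,Segers2017} for the two technical inputs rather than reprove them.
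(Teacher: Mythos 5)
Your decomposition into a stochastic term (replace $C$ by $\widehat C_n$ at the grid points) plus a deterministic Bernstein bias term is exactly the paper's $A_1+A_2$ split, and your treatment of the bias — mean value theorem on the increments $m\{C((k+1)/m,\ell/m)-C(k/m,\ell/m)\}$, Lipschitz continuity of $\dot C_1$ from Assumption~\ref{ass:2} via \cite[Lemma 4.3]{Segers2012}, and binomial concentration of the weights giving $O(m^{-1/2})$, absorbed into the stated rate because $n/m^2$ stays bounded — matches the paper's handling of $A_2$ essentially line for line.

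The gap is in the stochastic term. You keep the first difference on $\widehat C_n-C$ and invoke a Stute-type oscillation bound over intervals of length $1/m$; but that result gives increments of order $n^{-1/2}\{(1/m)\log m\}^{1/2}$ (your own parenthetical $\{(1/m)\log(\cdot)\}^{1/2}$ says as much), not the $n^{-1/2}m^{-1/2}(\log\log n)^{1/2}$ you then assert. After multiplying by the leading factor $m$ this route delivers $O(m^{1/2}n^{-1/2}(\log m)^{1/2})$, and since $m$ grows polynomially in $n$ this is $(\log n)^{1/2}$ rather than $(\log\log n)^{1/2}$ — strictly weaker than the claimed rate. It would also require an oscillation result for the empirical \emph{copula} process (estimated margins), which is not an off-the-shelf consequence of the uniform empirical process results you gesture at. The paper sidesteps both issues with a summation-by-parts step: using the identity $m\sum_{k=0}^{m-1}(a_{k+1}-a_k)P_{m-1,k}(u)=\sum_{k=0}^{m}a_kP'_{m,k}(u)$ with $a_k=(\widehat C_n-C)(k/m,\ell/m)$, the difference is transferred onto the Bernstein weights, so only the global law-of-the-iterated-logarithm bound $\sup|\widehat C_n-C|=O(n^{-1/2}(\log\log n)^{1/2})$ is needed, combined with $\sum_{k}|P'_{m,k}(u)|=O(m^{1/2})$ from \cite[Lemma 1]{Jansen2014}. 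That yields the stated $(\log\log n)^{1/2}$ rate with no modulus-of-continuity input. Your final paragraph about boundary uniformity raises a fair concern, but the fix is not a shrinking-neighbourhood argument: it is precisely this transfer of the increment onto $P'_{m,k}$ together with the weighted bounds of Assumption~\ref{ass:2} entering through the Lipschitz constant of $\dot C_1$.
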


\begin{proof}[Proof of Proposition~\ref{prop:2}]
	The proof is postponed  to the appendix.
\end{proof}

The limiting behaviour of the multiplier replicates of the symmetrised empirical copula process are stated in the following theorem.

\begin{thm}\label{thm:4}
	Let $ C $ be a symmetric copula  satisfying Assumption~\ref{ass:1}-\ref{ass:2}, if $m=cn^{\alpha}$ with $3/4<\alpha<1$ and  $c>0$ , then for $(u, v)\in (0, 1)^2$,
	\begin{equation*}
		\left(\bS_{n,m}, \bS_{n,m}^{(1)}, \ldots, \bS_{n,m}^{(H)}\right)\rightsquigarrow \left(\bS_C, \bS_C^{(1)}, \ldots, \bS_C^{(H)}\right), 
	\end{equation*}
	as $n\rightarrow \infty$, where for each $h\in \{1, \ldots, H\}$,
	\begin{align*}
		\bS_{n,m}^{(h)}(u, v)&=\BB_{n,m}^{(h)}(u, v)-\BB_{n,m}^{(h)}(v, u)\\
		&=\overline{\bS}^{(h)}_{n,m}(u, v)+\frac{\partial \widehat{C}_{n,m}(u, v)}{\partial u}\overline{\bS}^{(h)}_{n,m}(u, 1)-\frac{\partial \widehat{C}_{n,m}(u, v)}{\partial v}\overline{\bS}^{(h)}_{n,m}(1, v),
	\end{align*}
	and $	\overline{\bS}^{(h)}_{n,m}(u, v)=\overline{\BB}^{(h)}_{n,m}(u, v)-\overline{\BB}^{(h)}_{n,m}(v, u)$.
	
\end{thm}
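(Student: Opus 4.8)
The plan is to realise the symmetrised statistic $\bS_{n,m}$ and each bootstrap replicate $\bS_{n,m}^{(h)}$ as the image of a single vector of Bernstein‑smoothed empirical processes under one fixed bounded linear operator, and then to invoke a joint multiplier central limit theorem together with the continuous mapping theorem. Write $\phi\colon f\mapsto\{(u,v)\mapsto f(u,v)-f(v,u)\}$ for the symmetrisation operator, which is bounded and linear on $\ell^\infty([0,1]^2)$ and on $\ell^\infty((0,1)^2)$; then $\bS_{n,m}=\phi\,\BB_{n,m}$, $\bS_{n,m}^{(h)}=\phi\,\BB_{n,m}^{(h)}$, $\overline{\bS}_{n,m}^{(h)}=\phi\,\overline{\BB}_{n,m}^{(h)}$, and in the limit $\bS_C=\phi\,\BB_C$ with $\mathbb{D}=\phi\,\mathbb{C}$. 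By the continuous mapping theorem it therefore suffices to prove
\[
\bigl(\BB_{n,m},\BB_{n,m}^{(1)},\ldots,\BB_{n,m}^{(H)}\bigr)\rightsquigarrow\bigl(\BB_C,\BB_C^{(1)},\ldots,\BB_C^{(H)}\bigr)
\]
in $\ell^\infty\bigl((0,1)^2\bigr)^{H+1}$, where $\BB_C^{(h)}(u,v)=\mathbb{C}^{(h)}(u,v)-\dot{C}_1(u,v)\mathbb{C}^{(h)}(u,1)-\dot{C}_2(u,v)\mathbb{C}^{(h)}(1,v)$ and $\mathbb{C}^{(1)},\ldots,\mathbb{C}^{(H)}$ are mutually independent copies of $\mathbb{C}$, themselves independent of $\mathbb{C}$; applying $\phi$ and using the null identity $\dot{C}_1(u,v)=\dot{C}_2(v,u)$ exactly as in the proof of Theorem~\ref{thm:1} rewrites the limit as $(\bS_C,\bS_C^{(1)},\ldots,\bS_C^{(H)})$ with $\bS_C^{(h)}=\phi\,\BB_C^{(h)}$ and $\mathbb{D}^{(h)}=\phi\,\mathbb{C}^{(h)}$.

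Next I would reduce this to the raw multiplier processes. The asymptotic representation of \cite{Segers2017} behind~\eqref{eq:2022-08-25, 2:21PM} gives $\BB_{n,m}=\Psi\,\widetilde{\BB}_{n,m}+o_{\PP}(1)$ uniformly on $(0,1)^2$, where $\Psi\colon f\mapsto\{(u,v)\mapsto f(u,v)-\dot{C}_1(u,v)f(u,1)-\dot{C}_2(u,v)f(1,v)\}$ is bounded on $\ell^\infty$ because $0\le\dot{C}_1,\dot{C}_2\le1$. For the replicates, the defining formula for $\BB_{n,m}^{(h)}$, the uniform consistency of $\partial_u\widehat{C}_{n,m}$ and $\partial_v\widehat{C}_{n,m}$ towards $\dot{C}_1$ and $\dot{C}_2$ from Proposition~\ref{prop:2} (which is where $\alpha<1$ enters), and the asymptotic tightness of $\overline{\BB}_{n,m}^{(h)}$ from Proposition~\ref{prop:1}, yield $\BB_{n,m}^{(h)}=\Psi\,\overline{\BB}_{n,m}^{(h)}+o_{\PP}(1)$ uniformly on $(0,1)^2$; this is also the step where the null identity $\dot{C}_1(u,v)=\dot{C}_2(v,u)$ is used to recombine the four boundary correction terms coming out of $\BB_{n,m}^{(h)}(u,v)-\BB_{n,m}^{(h)}(v,u)$ into the stated representation of $\bS_{n,m}^{(h)}$ through $\overline{\bS}_{n,m}^{(h)}$. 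Applying the continuous mapping theorem to the product map $(f_0,f_1,\ldots,f_H)\mapsto(\Psi f_0,\Psi f_1,\ldots,\Psi f_H)$, the problem reduces to
\[
\bigl(\widetilde{\BB}_{n,m},\overline{\BB}_{n,m}^{(1)},\ldots,\overline{\BB}_{n,m}^{(H)}\bigr)\rightsquigarrow\bigl(\mathbb{C},\mathbb{C}^{(1)},\ldots,\mathbb{C}^{(H)}\bigr)
\]
with the $\mathbb{C}^{(h)}$ independent copies of $\mathbb{C}$, independent of $\mathbb{C}$.

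To prove this joint convergence I would run the multiplier CLT jointly. Each coordinate converges marginally by Proposition~\ref{prop:1}, and joint asymptotic tightness is inherited from the marginal ones, so by Prohorov's theorem it only remains to identify the limiting finite‑dimensional distributions. Conditionally on the data these are jointly Gaussian, by the classical multiplier CLT carried through the Bernstein averaging operator (a norm‑one contraction) as in \cite{Segers2017, Kiriliouk2021}, each block carrying the covariance $\Gamma_{\mathbb{C}}$ of Proposition~\ref{prop:1}; since the multiplier samples $\bm{\xi}_n^{(1)},\ldots,\bm{\xi}_n^{(H)}$ are mutually independent with unit mean and variance, every cross‑covariance between distinct replicates, and between a replicate and $\widetilde{\BB}_{n,m}$, vanishes in the limit, which delivers the required mutual independence of $\mathbb{C},\mathbb{C}^{(1)},\ldots,\mathbb{C}^{(H)}$. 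The Bernstein smoothing bias is $O(1/m)=o(n^{-1/2})$ when $\alpha>1/2$ and cancels after symmetrisation when $\alpha=1/2$, so that $\alpha\ge1/2$ is enough throughout. Chaining the three reductions then proves the theorem.

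The hard part will be this joint multiplier CLT: handling the pseudo‑observations $\widehat{U}_i,\widehat{V}_i$ inside all $H$ multiplier processes simultaneously while proving that the limiting replicates are genuinely independent of one another and of $\mathbb{C}$, and checking that the bandwidth window $1/2\le\alpha<1$ is exactly what the argument consumes --- $\alpha<1$ so that the $O\bigl(m^{1/2}n^{-1/2}(\log\log n)^{1/2}\bigr)$ rate of Proposition~\ref{prop:2} lets $\partial_u\widehat{C}_{n,m}$ and $\partial_v\widehat{C}_{n,m}$ be frozen at $\dot{C}_1$ and $\dot{C}_2$ on the $\sqrt{n}$ scale, and $\alpha\ge1/2$ so that the smoothing bias (or its symmetrised cancellation) and Proposition~\ref{prop:1} apply. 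Once the last display is in hand, the passage to $(\bS_{n,m},\bS_{n,m}^{(1)},\ldots,\bS_{n,m}^{(H)})$ is two applications of the continuous mapping theorem together with the symmetry bookkeeping already carried out in Theorem~\ref{thm:1}.
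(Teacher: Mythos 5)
Your proposal follows essentially the route the paper intends: Theorem~\ref{thm:4} is stated without a separate proof, being asserted to follow from Proposition~\ref{prop:1} (weak convergence of $\widetilde{\BB}_{n,m}$ and of each $\overline{\BB}^{(h)}_{n,m}$), Proposition~\ref{prop:2} (uniform consistency of the Bernstein partial derivatives, which is exactly where $\alpha<1$ enters), the null identity $\dot{C}_1(u,v)=\dot{C}_2(v,u)$, and the continuous mapping theorem --- precisely the three reductions you chain together. Where you go beyond the paper is also where the paper is genuinely incomplete: Proposition~\ref{prop:1} only gives marginal convergence of each replicate, whereas the theorem (and the validity of the p-values in Corollary~\ref{Cor:1}) requires joint convergence of the $(H+1)$-vector to \emph{mutually independent} copies of $\CC$; your appeal to the conditional multiplier CLT, with joint tightness inherited from the coordinates and cross-covariances vanishing because the multiplier samples are independent, is the right way to supply this missing step. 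One bookkeeping remark: if you carry out the recombination of the four boundary-correction terms explicitly, both corrections come out with a minus sign, consistent with $\bS_C=\DD-\dot{C}_1\DD(\cdot,1)-\dot{C}_2\DD(1,\cdot)$ in Theorem~\ref{thm:1}, so the ``$+$'' in the theorem's displayed representation of $\bS^{(h)}_{n,m}$ is best regarded as a typo rather than something your argument should reproduce.
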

\begin{proof}[Proof of Theorem~\ref{thm:4}]
Given that under the null hypothesis
\begin{equation*}
	\dot{C}_1(u, v)=\dot{C}_2(v, u),
\end{equation*}
the proof is a direct application of Proposition~\ref{prop:2} and the continuous mapping theorem.
\end{proof}
\bn Combining Theorems~\ref{thm:2} and~\ref{thm:4}, the asymptotic properties of replicates of statistics defined by~\eqref{eq:Statest} are established in the following corollary.
\begin{corollary}\label{Cor:1}
	Let $ C $ be a symmetric copula  satisfying Assumption~\ref{ass:1}-\ref{ass:2}, if $m=cn^{\alpha}$ with $3/4< \alpha<1$ and $c>0$,  then as $n\rightarrow \infty$, for all $H\in \NN$, one has 
	\begin{align*}
		\left(nR_{n,m}, nR_{n,m}^{(1)}, \ldots , nR_{n,m}^{(H)}\right)&\rightsquigarrow \left(\bL_R, \bL_R^{(1)}, \ldots, \bL_R^{(H)}\right),\\
		\left(nS_{n,m}, nS_{n,m}^{(1)}, \ldots , nS_{n,m}^{(H)}\right)&\rightsquigarrow \left(\bL_S, \bL_S^{(1)}, \ldots, \bL_S^{(H)}\right),\\
		\left(n^{1/2}T_{n,m}, n^{1/2}T_{n,m}^{(1)}, \ldots , n^{1/2}T_{n,m}^{(H)}\right)&\rightsquigarrow \left(\bL_T, \bL_T^{(1)}, \ldots, \bL_T^{(H)}\right).\\
	\end{align*}
\end{corollary}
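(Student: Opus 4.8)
The plan is to derive Corollary~\ref{Cor:1} by combining the joint weak convergence established in Theorem~\ref{thm:4} with the continuous mapping theorem, treating each of the three statistics separately according to the functional that produces it from the symmetrised process. The key observation is that $nR_{n,m}$, $nS_{n,m}$ and $n^{1/2}T_{n,m}$ (and their bootstrap analogues) are all obtained by applying a fixed measurable map to $\bS_{n,m}$ (respectively $\bS_{n,m}^{(h)}$), and these maps are continuous on the relevant subset of $\ell^\infty([0,1]^2)$ where the limit process $\bS_C$ lives with probability one. Since Theorem~\ref{thm:4} gives us $(\bS_{n,m},\bS_{n,m}^{(1)},\dots,\bS_{n,m}^{(H)})\rightsquigarrow(\bS_C,\bS_C^{(1)},\dots,\bS_C^{(H)})$ jointly in $(\ell^\infty([0,1]^2))^{H+1}$, applying the same continuous functional coordinate-wise yields the claimed joint convergence of the $(H+1)$-tuples of statistics.

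**First I would** handle the sup-statistic $T$: the map $f\mapsto \sup_{(u,v)\in[0,1]^2}|f(u,v)|$ is $1$-Lipschitz from $\ell^\infty([0,1]^2)$ to $\RR$ with respect to the sup-norm, hence globally continuous, so $n^{1/2}T_{n,m}^{(h)}=\sup|\bS_{n,m}^{(h)}|\rightsquigarrow\sup|\bS_C^{(h)}|=\bL_T^{(h)}$ jointly over $h$ without any caveat. **Next**, for $R$ I would note that $f\mapsto\int_0^1\int_0^1 f(u,v)^2\,\dif u\,\dif v$ is continuous on $\ell^\infty([0,1]^2)$ (indeed $|\!\int f^2-\int g^2|\le\|f-g\|_\infty(\|f\|_\infty+\|g\|_\infty)$), so $nR_{n,m}^{(h)}\rightsquigarrow\bL_R^{(h)}$ follows immediately as well. **The one step requiring care** is $S$: here the integrating measure $\widehat{C}_n$ is itself random and converges (uniformly, almost surely, by the Glivenko--Cantelli property for the empirical copula) to $C$, so one cannot simply invoke continuity of a single fixed functional. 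The clean way is to write
\begin{equation*}
	nS_{n,m}=\int_0^1\!\!\int_0^1\bS_{n,m}(u,v)^2\,\dif\widehat{C}_n(u,v)=\int_0^1\!\!\int_0^1\bS_{n,m}(u,v)^2\,\dif C(u,v)+\int_0^1\!\!\int_0^1\bS_{n,m}(u,v)^2\,\dif(\widehat{C}_n-C)(u,v),
\end{equation*}
argue that the first term converges to $\bL_S$ by continuity of $f\mapsto\int f^2\,\dif C$, and show the second term is $o_P(1)$ using $\|\widehat{C}_n-C\|_\infty\to 0$ a.s.\ together with tightness of $\bS_{n,m}$ (which is already implied by its weak convergence); an integration-by-parts bound, or the standard argument that $\int g_n\,\dif(\mu_n-\mu)\to 0$ when $g_n$ is uniformly bounded and equicontinuous and $\mu_n\Rightarrow\mu$, closes this. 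This same decomposition handles the joint statement, since the remainder terms vanish simultaneously for the original statistic and all $H$ replicates.

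**I expect the main obstacle** to be precisely the handling of the random integrator in $S_{n,m}$, since it is the only place where a naive appeal to the continuous mapping theorem fails; everything else is a routine verification that the relevant functionals are continuous for the sup-norm. A secondary point worth stating explicitly is that the joint convergence (rather than marginal convergence of each statistic) is what licenses valid bootstrap $P$-values, and it is inherited automatically because Theorem~\ref{thm:4} is already phrased jointly and the continuous mapping theorem respects the product structure; one applies the map $(g_0,g_1,\dots,g_H)\mapsto(\Phi(g_0),\Phi(g_1),\dots,\Phi(g_H))$ for each of the three choices $\Phi\in\{\int(\cdot)^2\,\dif u\,\dif v,\ \int(\cdot)^2\,\dif C,\ \sup|\cdot|\}$. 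Finally I would remark that the hypothesis $m=cn^\alpha$ with $1/2\le\alpha<1$ is inherited verbatim from Theorem~\ref{thm:4} and is used nowhere else in this argument.
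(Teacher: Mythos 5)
Your overall architecture coincides with the paper's: the corollary is obtained by combining the joint convergence $(\bS_{n,m},\bS_{n,m}^{(1)},\dots,\bS_{n,m}^{(H)})\rightsquigarrow(\bS_C,\bS_C^{(1)},\dots,\bS_C^{(H)})$ from Theorem~\ref{thm:4} with the arguments used for Theorem~\ref{thm:2}, applied coordinate-wise; your treatment of $R$ and $T$ via sup-norm continuity of $f\mapsto\int f^2$ and $f\mapsto\sup|f|$ is exactly what the paper does. Where you genuinely diverge is the $S$ statistic. The paper does not decompose the integral; it recasts $nS_{n,m}$ as $n^{1/2}\{\phi(A_{n,m},\widehat C_{n,m})-\phi(A,C)\}$ for the bilinear map $\phi(a,b)=\int_{(0,1]^2}a\,\dif b$ on $\ell^{\infty}[0,1]^2\times\mathrm{BV}_1[0,1]^2$, invokes the Hadamard differentiability of $\phi$ tangentially to $C[0,1]^2\times D[0,1]^2$ (Lemma 4.3 of Carabarin-Aguirre and Ivanoff), and applies the functional delta method, which yields $\int \bS_C^2\,\dif C$ in one stroke. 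Your route --- splitting $\int\bS_{n,m}^2\,\dif\widehat C_n$ into $\int\bS_{n,m}^2\,\dif C$ plus a remainder against $\dif(\widehat C_n-C)$ --- is workable but the remainder is the delicate point: the total variation of $\widehat C_n-C$ does not vanish, so you cannot bound the remainder by $\|\bS_{n,m}^2\|_\infty\cdot\|\widehat C_n-C\|_{TV}$, and the ``standard argument'' you cite requires the integrands to be (deterministically) uniformly equicontinuous, whereas $\bS_{n,m}$ is only asymptotically equicontinuous in probability. Closing this requires either an almost-sure (Skorokhod) representation or the extended continuous mapping theorem, which is precisely the bookkeeping that the delta-method/Hadamard-differentiability route packages away. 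So: your proof is correct in outline and arguably more elementary and self-contained for $R$ and $T$, but for $S$ the paper's functional delta method is the cleaner and fully rigorous path, and if you keep your decomposition you should state explicitly how the asymptotic equicontinuity of $\bS_{n,m}^2$ is upgraded to the form needed for the remainder to vanish.
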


\begin{remark}
Note that, under Assumption 1-2, the proposed statistics together with multiplier bootstraps can be valid under the same magnitude requirement of $m$, which is $m=cn^{\alpha}$ with $c>0, 3/4 <\alpha< 1$.
\end{remark}

\n It follows directly from Corollary~\ref{Cor:1} that P-values of the proposed tests can be computed as
\begin{equation*}
	\frac{1}{H}\sum_{h=1}^{H}\II\left(R_{n,m}^{(h)}\geqslant R_{n,m}\right),\quad 	\frac{1}{H}\sum_{h=1}^{H}\II\left(S_{n,m}^{(h)}\geqslant S_{n,m}\right), \quad 	\frac{1}{H}\sum_{h=1}^{H}\II\left(T_{n,m}^{(h)}\geqslant T_{n,m}\right).
\end{equation*}
This approach will be employed to obtain the empirical level and power as shown in the next section.

\section{Finite sample performance \label{sec:Sim}}
\n The finite sample performance of the proposed testing procedure is investigated in this section through a Monte Carlo experiment. All the tests were conducted with $500$ repetitions under $5\%$ nominal level using $ H=200 $ multiplier replicates, also since the statistics shown in~\thmref{thm:2} involve integration, a discrete approximation is applied with size of integration grid $N=20$ (i.e.  $20\times 20$ points on $[0, 1]^2$).
\subsection{Comparison with the empirical copula-based tests}

 To assess the improvement of the empirical Bernstein copula process-based tests defined in~\eqref{eq:(4)} and denoted by $\{R_{n, m}, S_{n, m}, T_{n, m}\}$,  the empirical level and power of the proposed tests were compared with the tests based on the empirical copula process in~\cite{Genest2012} denoted by $\{R_n, S_n, T_n\}$. For the empirical size, samples with sizes $n=\{50, 100, 200\}$ were generated from the Gaussian, Clayton, Gumbel-Hougaard and Frank copulas with Kendall's tau $\tau = 0.25$.  Since the Bernstein order $m$ is dependent on the sample size, $m$ is chosen using  $m= \lfloor c n^{\alpha} \rfloor =\{9, 13, 15\}$, where $\alpha =  4/5$  and $c = \{0.40, 0.33, 0.22\}$.

 \n The empirical level of the tests are presented in Table~\ref{tab:1}. Evidently, a significant portion of the Bernstein tests falls below their designated nominal level; however, their performance surpasses that of the empirical tests.

\begin{table}[H]
	\centering
	\caption{Level ($ \% $) of the tests of $ \mathscr{H}_0 $ based on $ \{ R_n, S_n, T_n\} $ and $ \{ R_{n,m}, S_{n,m}, T_{n,m}\} $, as estimated from $ 500 $ samples from symmetric copulas using $ H=200 $ multiplier bootstrap replicates with Bernstein order $m=\{9, 13, 15\}$ associated with sample size $n=\{50, 100, 200\}$.}
	\small
	\begin{tabular}[t]{lccccccc}
		\toprule[1.5pt]
		Model&$(n,m)$&$R_n$&$R_{n,m}$ &$S_n$&$S_{n,m}$&$T_n$&$T_{n,m}$ \\
		\midrule
		\multirow{3}{*}{$\rm{Gaussian}$}&$(50, 9)$&$1.2$&$2.0$ &$3.2$&$3.4$&$1.0$&$2.0$ \\
		&$(100, 13)$&$2.6$&$4.6$ &$4.4$&$5.2$&$2.8$&$4.0$ \\
		&$(200,15)$&$2.8$&$3.4$ &$1.8$&$3.6$&$7.4$&$3.8$ \\
		\midrule
		\multirow{3}{*}{$\rm{Clayton}$}&$(50, 9)$&$1.2$&$3.2$ &$4.2$&$4.2$&$1.0$&$2.2$ \\
		&$(100, 13)$&$1.0$&$3.6$ &$3.6$&$3.8$&$1.8$&$2.6$ \\
		&$(200,15)$&$4.4$&$4.6$ &$4.2$&$5.2$&$6.2$&$4.2$ \\	
		\midrule
		\multirow{2.5}{*}{$\rm{Gumbel}$}&$(50,9)$&$0.8$&$ 3.2$ &$3.4$&$3.8$&$0.4$&$1.6$ \\
			\multirow{1.5}{*}{$\rm{Hougaard}$}
		&$(100,13)$&$1.8$&$4.0$ &$2.8$&$3.4$&$1.8$&$2.8$ \\
		&$(200,15)$&$3.0$&$4.2$ &$2.6$&$4.0$&$6.2$&$4.0$\\ 
		\midrule
		\multirow{3}{*}{$\rm{Frank}$}&$(50, 9)$&$1.4$&$ 2.4$ &$3.4$&$3.4$&$1.6$&$2.2$ \\
		&$(100, 13)$&$2.6$&$2.6$ &$2.2$&$2.6$&$1.6$&$2.8$ \\
		&$(200,15)$&$3.6$&$3.8$ &$4.0$&$4.4$&$6.4$&$4.0$ \\
		\bottomrule[1.5pt]
	\end{tabular}
	
	\label{tab:1}
\end{table}

\n To study the power of the considered tests, samples of sizes $n=100$ and $n=200$ were generated from the Gaussian (GA), Frank (FR), Gumbel-Hougaard (GU)
and Student (ST) copulas, made asymmetric using the Khoudraji’s device~\citep{Khoudraji:1995}. The Khoudraji's device is defined as 
\begin{equation*}
	K_{\delta}(u, v)=u^{\delta}C(u^{1-\delta}, v),
\end{equation*}
and is implemented in the \textbf{R}~\citep{R} package \verb+copula+ by~\cite{Copula}. 

\n Different values of the shape parameter $\delta=\{1/4, 1/2, 3/4\}$ as well as various values of Kendall's tau $\tau=\{0.5, 0.7, 0.9\}$  were considered to assess their influence on the power. A quick inspection of  Tables~\ref{table:2} and~\ref{table:3}, one can observe that for large value of $\tau$ or $n$, the power of the tests increase, and they reach their maximum at $\delta=1/2$. Under all  circumstances, the proposed tests outperform the empirical copulas tests or their differences are negligible. Moreover, significant improvements are discerned in the effectiveness of $T_{n}$, especially when $\tau$ assumes large values. This implies that, in such cases, Bernstein smoothing demonstrates greater efficacy for the Kolmogorov-Smirnov type statistic $T_n$ compared to the Cram\'{e}r-von Mises type statistics $R_n$ and $S_n$.  Finally, by combining the tables of level and power, it is safe to claim that the power of the tests is not affected by the empirical levels.

\begin{table}[H]
	\centering
	\caption{Power ($ \% $) of the tests of $ \mathscr{H}_0 $ based on $ \{ R_n, S_n, T_n\} $ and $ \{ R_{n,m}, S_{n,m}, T_{n,m}\} $, as estimated from $ 500 $ samples from asymmetric copulas using $ H=200 $ multiplier bootstrap replicates with Bernstein order $m=13$ associated and sample size $n= 100$.}
	\resizebox{13cm}{!}{
		\begin{tabular}{lcccccccc}
			\toprule[1.5pt]
			Model&$\delta$&$\tau$&$R_n$&$R_{n,m}$ &$S_n$&$S_{n,m}$&$T_n$&$T_{n,m}$ \\
			\midrule
			\multirow{9}{*}{$K_{\delta}^{GA}$}&\multirow{3}{*}{$1/4$}&${0.5}$&$5.2$&$11.2$ &$7.0$&$11.0$&$5.0$&$7.4$ \\
			&&$0.7$&$31.4$&$48.2$ &$46.6$&$51.4$&$17.0$&$31.6$ \\
			&&$0.9$&$90.2$&$92.8$ &$99.8$&$98.2$&$57.0$&$71.6$ \\
			&\multirow{3}{*}{$1/2$}&$0.5$&$15.6$&$27.8$ &$18.8$&$29.6$&$9.0$&$18.2$ \\
			&&$0.7$&$70.2$&$83.6$ &$75.8$&$84.0$&$38.4$&$67.8$ \\
			&&$0.9$&$99.6$&$100.0$ &$100.0$&$100.0$&$81.0$&$98.8$ \\
			&\multirow{3}{*}{$3/4$}&$0.5$&$9.4$&$18.2$ &$13.8$&$18.2$&$6.2$&$12.2$ \\
			&&$0.7$&$42.8$&$62.2$ &$51.0$&$62.0$&$23.2$&$47.4$ \\
			&&$0.9$&$74.6$&$88.4$ &$85.2$&$88.2$&$44.4$&$74.0$ \\
			\midrule
			\multirow{9}{*}{$K_{\delta}^{FR}$}&\multirow{3}{*}{$1/4$}&$0.5$&$6.0$&$14.6$ &$11.8$&$14.0$&$5.0$&$8.6$ \\
			&&${0.7}$&$38.8$&$55.4$ &$54.4$&$56.6$&$22.0$&$36.2$ \\
			&&$0.9$&$90.0$&$90.0$ &$100.0$&$97.2$&$60.2$&$72.2$ \\
			&\multirow{3}{*}{$1/2$}&$0.5$&$21.0$&$31.6$ &$26.6$&$32.8$&$16.4$&$25.8$ \\
			&&$0.7$&$75.8$&$87.4$ &$84.8$&$89.0$&$46.4$&$76.4$ \\
			&&$0.9$&$99.8$&$99.8$ &$100.0$&$100.0$&$82.0$&$98.4$ \\
			&\multirow{3}{*}{$3/4$}&$0.5$&$12.2$&$20.0$ &$14.8$&$20.6$&$8.8$&$14.6$ \\
			&&$0.7$&$38.4$&$56.8$ &$46.6$&$54.8$&$20.2$&$44.0$ \\
			&&$0.9$&$71.0$&$88.4$ &$81.8$&$88.6$&$36.6$&$70.2$ \\
			\midrule
			\multirow{9}{*}{$K_{\delta}^{GU}$}&\multirow{3}{*}{$1/4$}&${0.5}$&$7.4$&$15.2$ &$9.8$&$14.8$&$8.0$&$10.6$ \\
			&&$0.7$&$35.2$&$50.6$ &$48.2$&$54.2$&$18.4$&$31.4$ \\
			&&$0.9$&$90.6$&$86.8$ &$99.8$&$97.0$&$58.2$&$70.6$ \\
			&\multirow{3}{*}{$1/2$}&$0.5$&$20.8$&$32.2$ &$26.6$&$35.2$&$10.2$&$24.0$ \\
			&&$0.7$&$80.6$&$87.6$ &$87.4$&$89.6$&$47.6$&$77.2$ \\
			&&$0.9$&$100.0$&$100.0$ &$100.0$&$100.0$&$84.6$&$98.2$ \\
			&\multirow{3}{*}{$3/4$}&$0.5$&$21.8$&$32.6$ &$24.0$&$32.2$&$11.4$&$21.8$ \\
			&&$0.7$&$56.0$&$73.4$ &$62.0$&$73.0$&$27.6$&$57.0$ \\
			&&$0.9$&$77.0$&$86.8$ &$86.8$&$88.0$&$44.0$&$75.4$ \\		
			\midrule
			\multirow{9}{*}{$K_{\delta}^{ST}$}&\multirow{3}{*}{$1/4$}&$0.5$&$5.8$&$14.4$ &$11.0$&$17.2$&$7.4$&$9.2$ \\
			&&$0.7$&$36.8$&$48.0$ &$51.4$&$54.0$&$18.6$&$32.6$ \\
			&&$0.9$&$88.4$&$88.8$ &$99.4$&$95.6$&$55.0$&$72.6$ \\
			&\multirow{3}{*}{$1/2$}&$0.5$&$16.0$&$29.8$ &$24.4$&$30.2$&$11.0$&$20.8$ \\
			&&$0.7$&$71.0$&$82.6$ &$80.4$&$84.4$&$38.6$&$68.4$ \\
			&&$0.9$&$99.8$&$100.0$ &$100.0$&$100.0$&$78.2$&$97.8$ \\
			&\multirow{3}{*}{$3/4$}&$0.5$&$12.4$&$23.6$ &$15.4$&$22.4$&$8.6$&$15.8$ \\
			&&$0.7$&$40.8$&$57.8$ &$54.0$&$60.8$&$22.4$&$43.8$ \\
			&&$0.9$&$75.4$&$87.8$ &$84.0$&$88.6$&$42.0$&$73.2$ \\
			\bottomrule[1.5pt]
	\end{tabular}}
	\label{table:2}
\end{table}

\begin{table}[H]
	\centering
	\caption {Power ($ \% $) of the tests of $ \mathscr{H}_0 $ based on $ \{ R_n, S_n, T_n\} $ and $ \{ R_{n,m}, S_{n,m}, T_{n,m}\} $, as estimated from $ 500 $ samples from asymmetric copulas using $ H=200 $ multiplier bootstrap replicates with Bernstein order $m=15$ associated and sample size $n= 200$.}
	\resizebox{13cm}{!}{
		\begin{tabular}[t]{lcccccccc}
			\toprule[1.5pt]
			Model&$\delta$&$\tau$&$R_n$&$R_{n,m}$ &$S_n$&$S_{n,m}$&$T_n$&$T_{n,m}$ \\
			\midrule
			\multirow{9}{*}{$K_{\delta}^{GA}$}&\multirow{3}{*}{$1/4$}&${0.5}$&$15.0$&$24.0$ &$14.8$&$22.8$&$13.4$&$14.6$ \\
			&&$0.7$&$81.8$&$88.4$ &$82.6$&$89.4$&$59.8$&$70.4$ \\
			&&$0.9$&$100.0$&$100.0$ &$100.0$&$100.0$&$100.0$&$100.0$ \\
			&\multirow{3}{*}{$1/2$}&$0.5$&$37.0$&$45.4$ &$36.0$&$47.8$&$27.0$&$33.0$ \\
			&&$0.7$&$98.8$&$98.8$ &$99.0$&$99.8$&$88.2$&$96.6$ \\
			&&$0.9$&$100.0$&$100.0$ &$100.0$&$100.0$&$100.0$&$100.0$ \\
			&\multirow{3}{*}{$3/4$}&$0.5$&$32.4$&$44.2$ &$32.6$&$43.0$&$27.4$&$32.8$ \\
			&&$0.7$&$87.8$&$93.2$ &$90.0$&$94.0$&$67.6$&$84.8$ \\
			&&$0.9$&$99.8$&$99.8$ &$99.8$&$99.8$&$92.2$&$98.6$ \\
			\midrule
			\multirow{9}{*}{$K_{\delta}^{FR}$}&\multirow{3}{*}{$1/4$}&$0.5$&$25.2$&$34.4$ &$25.4$&$35.4$&$24.2$&$25.4$ \\
			&&${0.7}$&$93.2$&$95.0$ &$95.0$&$95.6$&$80.2$&$84.2$ \\
			&&$0.9$&$100.0$&$100.0$ &$100.0$&$100.0$&$100.0$&$99.4$ \\
			&\multirow{3}{*}{$1/2$}&$0.5$&$52.2$&$60.8$ &$52.0$&$61.4$&$41.2$&$48.2$ \\
			&&$0.7$&$99.6$&$100.0$ &$99.6$&$100.0$&$95.4$&$99.0$ \\
			&&$0.9$&$100.0$&$100.0$ &$100.0$&$100.0$&$100.0$&$100.0$ \\
			&\multirow{3}{*}{$3/4$}&$0.5$&$32.8$&$41.8$ &$34.4$&$42.6$&$26.4$&$34.2$ \\
			&&$0.7$&$79.2$&$88.4$ &$80.8$&$89.6$&$62.8$&$78.2$ \\
			&&$0.9$&$99.2$&$99.6$ &$99.6$&$99.6$&$91.6$&$99.0$ \\
			\midrule
			\multirow{9}{*}{$K_{\delta}^{GU}$}&\multirow{3}{*}{$1/4$}&${0.5}$&$21.2$&$30.6$ &$21.8$&$29.2$&$20.2$&$20.6$ \\
			&&$0.7$&$89.4$&$92.6$ &$89.8$&$92.2$&$68.4$&$79.2$ \\
			&&$0.9$&$100.0$&$100.0$ &$100.0$&$100.0$&$99.6$&$99.6$ \\
			&\multirow{3}{*}{$1/2$}&$0.5$&$59.6$&$67.4$ &$57.8$&$67.6$&$45.8$&$55.6$ \\
			&&$0.7$&$99.6$&$99.8$ &$99.6$&$99.8$&$94.4$&$98.4$ \\
			&&$0.9$&$100.0$&$100.0$ &$100.0$&$100.0$&$100.0$&$100.0$ \\
			&\multirow{3}{*}{$3/4$}&$0.5$&$54.6$&$66.2$ &$56.0$&$65.2$&$41.4$&$54.4$ \\
			&&$0.7$&$94.2$&$96.4$ &$94.8$&$96.6$&$77.2$&$91.2$ \\
			&&$0.9$&$99.2$&$99.4$ &$99.2$&$99.2$&$92.6$&$98.8$ \\		
			\midrule
			\multirow{9}{*}{$K_{\delta}^{ST}$}&\multirow{3}{*}{$1/4$}&$0.5$&$20.8$&$25.6$ &$21.4$&$27.6$&$16.4$&$20.2$ \\
			&&$0.7$&$93.2$&$95.0$ &$95.0$&$95.6$&$80.2$&$84.2$ \\
			&&$0.9$&$100.0$&$100.0$ &$100.0$&$100.0$&$100.0$&$99.4$ \\
			&\multirow{3}{*}{$1/2$}&$0.5$&$52.2$&$60.8$ &$52.0$&$61.4$&$41.2$&$48.2$ \\
			&&$0.7$&$99.6$&$100.0$ &$99.6$&$100.0$&$95.4$&$99.0$ \\
			&&$0.9$&$100.0$&$100.0$ &$100.0$&$100.0$&$100.0$&$100.0$ \\
			&\multirow{3}{*}{$3/4$}&$0.5$&$32.8$&$41.8$ &$34.4$&$42.6$&$26.4$&$34.2$ \\
			&&$0.7$&$79.2$&$88.4$ &$80.8$&$89.6$&$62.8$&$78.2$ \\
			&&$0.9$&$99.2$&$99.6$ &$99.6$&$99.6$&$91.6$&$99.0$ \\
			\bottomrule[1.5pt]
	\end{tabular}}
	
	\label{table:3}
\end{table}
\newpage
Selecting the Bernstein order for each test is beyond the scope of this paper. However, to examine the effect of the Bernstein order $m$ on the power of the proposed procedures, one can plot the graph of the power as a function of $m$. Figure~\ref{fig:Order} depicts the power of the proposed tests as function of  the Bernstein polynomial order $ m $ for asymmetric Gumbel-Hougaard copula model with shape parameter $\delta=3/4$ and Kendall's tau  $\tau=0.7$. From that figure, it can be seen that $R_{n, m}$ and $S_{n, m}$ experience a similar pattern as they are both Cram\'er-von Mises type statistics and the power seems to be stable for increasing $m$. Unexpectedly, the power of the test based on $T_{n, m}$ decreases as $m$ goes up. Overall, they outperform the empirical copula tests when $m$ is not extremely small. It is noted that other asymmetric copulas have shown almost the same pattern, and are not reported here. A practical way to select an appropriate Bernstein order $m$ will be discussed in the next subsection.  

\begin{figure}[H]
	\centering
	\begin{subfigure}{.5\textwidth}
		\includegraphics[width=3.1in, height=3.1in]{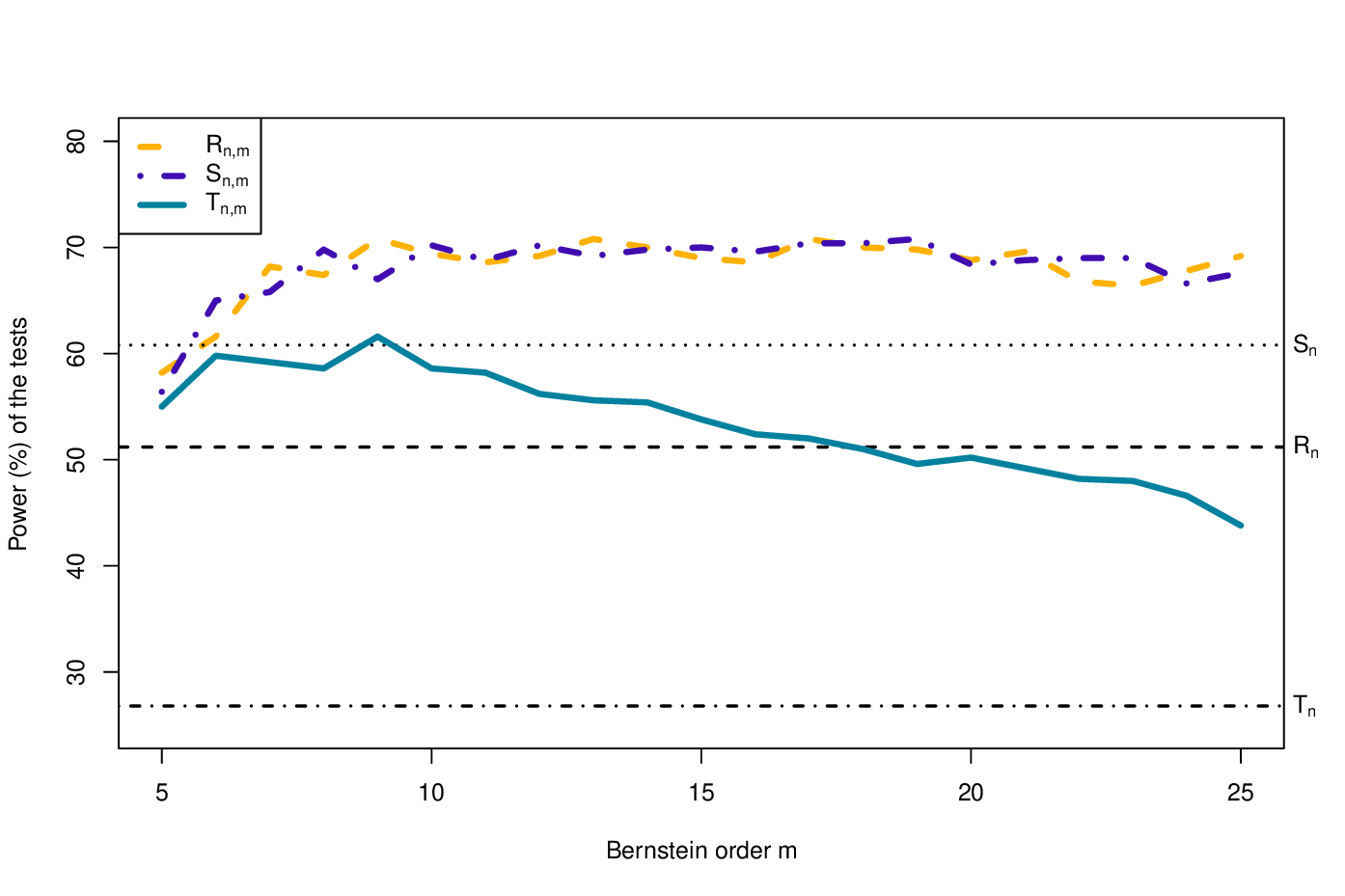}
		\label{fig:n=100}
	\end{subfigure}%
	\begin{subfigure}{.5\textwidth}
		\includegraphics[width=3.1in, height=3.1in]{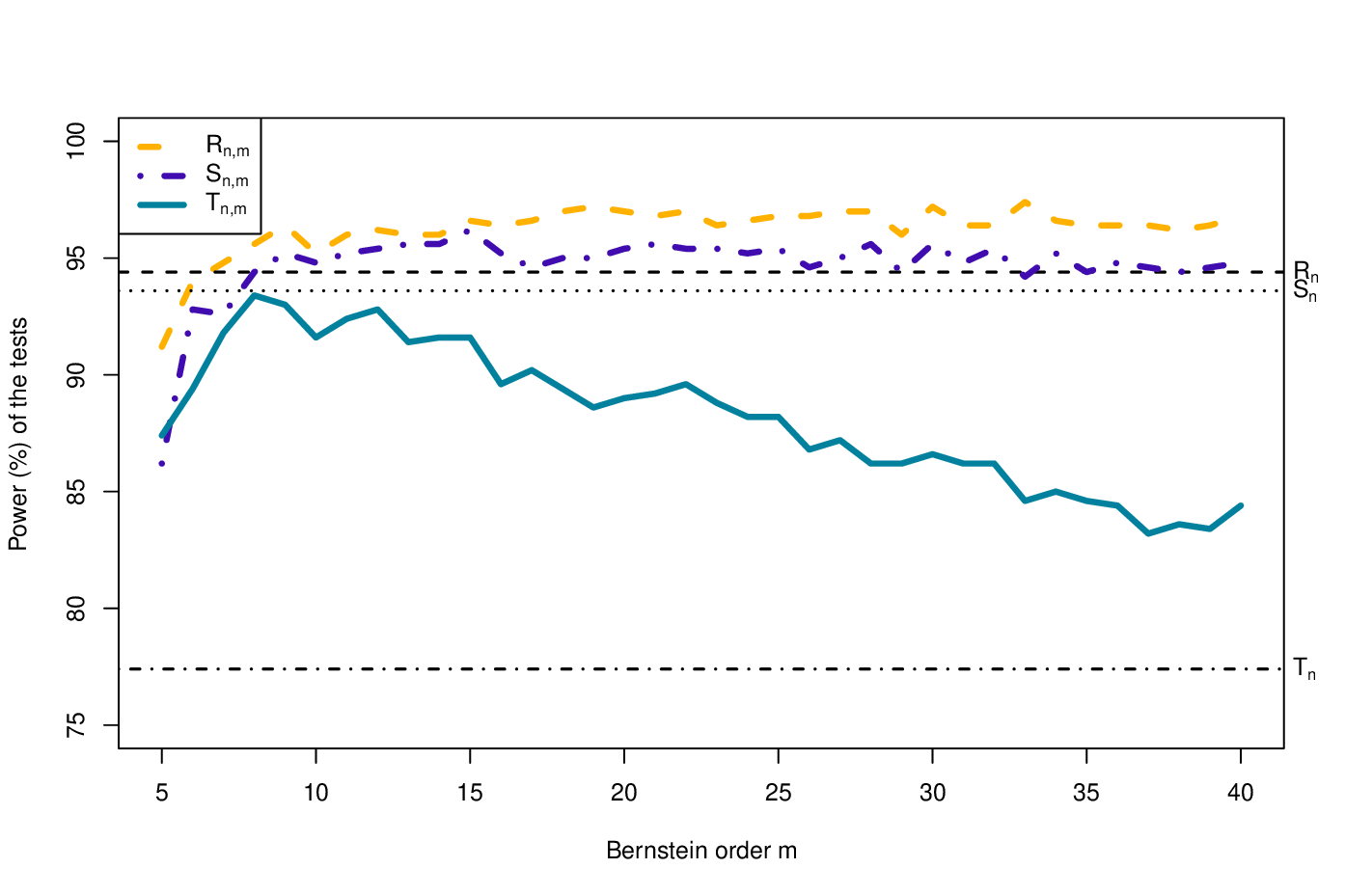}
		\label{fig:n=200}
	\end{subfigure}
	\caption{The power ($\%$) of the proposed tests versus the Bernstein polynomial order $ m $. \textbf{Left:} Gumbel-Hougaard copula with sample size $n=100$; \textbf{Right:}  Gumbel-Hougaard copula with sample size $n=200$.} 
	\label{fig:Order}
\end{figure}

\subsection{Comparison with empirical beta copula-based test}

The method of studying the effect of the Bernstein order $m$ in the preceding subsection is computationally expensive. In practice, one can apply the method suggested in~\cite{Segers2017} to reduce the size of the grid of the Bernstein order $m.$ This recommendation will be used in order to compare the empirical power of the proposed testing procedure to the test based on the empirical beta copula denoted by $R_n^{\beta}$ in~\cite{Kiriliouk2021}. It was shown in~\citet[Table 2.8]{Kiriliouk2021} that the smoothed beta bootstrap test based on $R_n^{\beta}$ has slightly higher power in almost all cases. In Table~\ref{table:4} the same configuration for Clayton copula made asymmetric by Khoudraji's device as in~\citet[Table 2.8]{Kiriliouk2021} was used. The table highlights the advantage of the empirical Bernstein copula-based tests using multiplier bootstraps on the smoothed beta bootstrap in almost all cases. 

\begin{remark}
	Note that, the selection method of the Bernstein order $m$ recommended in~\cite{Jansen2012} is not valid for the proposed multiplier bootstraps. Because this method of selection depends on the first and second partial derivatives of the underlying copula function $C$. Whereas for most copula models, these partial derivatives do not exist at the end points $(u, 1), (1, u), (1, v), (v, 1)$. Therefore, it fails to apply this approach for the second and third terms in Equation~\eqref{eq:2023-05-10, 3:30PM}. It is also noted that this method of selection $ m $ is valid only for interior points of $ [0,1]^2 $.
\end{remark}

\begin{table}[H]
	\centering
	\caption{Power ($ \% $) of the tests of $ \mathscr{H}_0 $ based on $ \{ R_n, S_n, T_n\} $, $ \{ R_{n,m}, S_{n,m}, T_{n,m}\} $ and $ R_n^{\beta} $, as estimated from $ 500 $ samples from Clayton made asymmetric using Khoudraji's device with $ H =200 $ multiplier replicates.}
	\resizebox{17cm}{!}{
		\begin{tabular}[t]{lccccccccc}
			\toprule[1.5pt]
			(n, m)&$\delta$&$\tau$&$R_n$&$R_{n,m}$ &$S_n$&$S_{n,m}$ &$T_n$&$T_{n,m}$ &$R_n^{\beta}$\\
			\midrule
			\multirow{9}{*}{$\left(50,  50/5\right)$}&\multirow{3}{*}{$1/4$}&${0.25}$&$1.0$&$3.6$ &$2.2$&$\textbf{3.8}$&$0.6$&$2.4$ &$3.6$\\
			&&$0.50$&$2.0$&$5.4$ &$5.4$&$\textbf{5.8}$&$1.2$&$1.8$&$5.4$ \\
			&&$0.75$&$14.8$&$21.0$ &$\textbf{41.8}$&$28.8$&$2.4$&$12.0$&$22.2$ \\
			&\multirow{3}{*}{$1/2$}&$0.25$&$0.8$&$\textbf{5.0}$ &$4.2$&$4.8$&$0.4$&$2.4$&$4.8$ \\
			&&$0.50$&$3.0$&$7.2$ &$6.4$&$\textbf{7.4}$&$0.6$&$3.8$&$7.0$ \\
			&&$0.75$&$24.8$&$45.8$ &$48.2$&$\textbf{52.0}$&$5.2$&$29.6$&$44.0$ \\
			&\multirow{3}{*}{$3/4$}&$0.25$&$0.8$&$\textbf{4.4}$ &$3.6$&$3.8$&$0.4$&$1.8$&$2.8$ \\
			&&$0.50$&$2.0$&$\textbf{6.0}$ &$5.6$&$5.8$&$1.0$&$2.6$&$5.8$ \\
			&&$0.75$&$9.8$&$\textbf{21.4}$& $15.0$&$\textbf{21.4}$&$2.4$&$11.2$&$18.0$ \\
			
			\midrule
			\multirow{9}{*}{$\left(100,  100/5\right)$}&\multirow{3}{*}{$1/4$}&${0.25}$&$1.8$&$\textbf{4.8}$ &$2.4$&$4.4$&$2.6$&$3.8$ &$4.0$\\
			&&$0.50$&$6.4$&$10.2$ &$9.2$&$\textbf{11.0}$&$5.0$&$5.2$ &$9.8$\\
			&&$0.75$&$51.4$&$62.6$ &$\textbf{77.6}$&$71.2$&$25.2$&$40.6$&$64.4$ \\
			&\multirow{3}{*}{$1/2$}&$0.25$&$1.6$&$3.6$&$2.2$ &$\textbf{4.0}$&$2.2$&$2.0$&$3.4$ \\
			&&$0.50$&$8.8$&$16.8$ &$14.0$&$\textbf{17.6}$&$5.8$&$10.4$&$14.0$ \\
			&&$0.75$&$70.6$&$83.8$&$82.2$&$\textbf{86.2}$&$41.6$&$66.0$&$82.8$ \\
			&\multirow{3}{*}{$3/4$}&$0.25$&$3.2$&$5.6$ &$5.2$&$\textbf{6.2}$&$2.6$&$3.4$&$4.4$ \\
			&&$0.50$&$6.2$&$10.6$ &$10.4$&$\textbf{12.2}$&$4.8$&$6.0$ &$10.0$\\
			&&$0.75$&$30.0$&$40.2$&$36.4$&$\textbf{42.2}$&$20.2$&$28.6$&$38.2$\\		
			\bottomrule[1.5pt]
	\end{tabular}}
	
	\label{table:4}
\end{table}

\section{Real data application \label{sec: realdata}}

\subsection{Ocean data application}

\n A simple illustration was carried out on the ocean data (at south Kodiak, Station $ 46066 $) from the \href{https://www.ndbc.noaa.gov/}{National Data Buoy Center (NDBC), US}. There are three variables of interest: WVHT (significant wave height in meter), APD (average wave period in second) and WSPD (wind speed in meter per second) during the winter season of $ 2015 $ (from November $ 2014 $ to February $ 2015 $)  with $2855$ observations. Although the data was modelled using asymmetric copulas in~\cite{Zhang2018}, they did not provide a formal statistical test to justify the asymmetric structure therein. Here, the justification is shown using the empirical tests and the proposed Bernstein tests as in Table~\ref{table:5}.

Rank plots are shown in Figure~\ref{fig:scatterplot}, one can notice that APD and WSPD are positively correlated with WVHT according to the Spearman's rho. Further, as presented in Figure~\ref{fig:rankplot}, these pairs most likely have an asymmetric dependence structure.

\begin{figure}[H]
	\centering
	\includegraphics[width=1\linewidth]{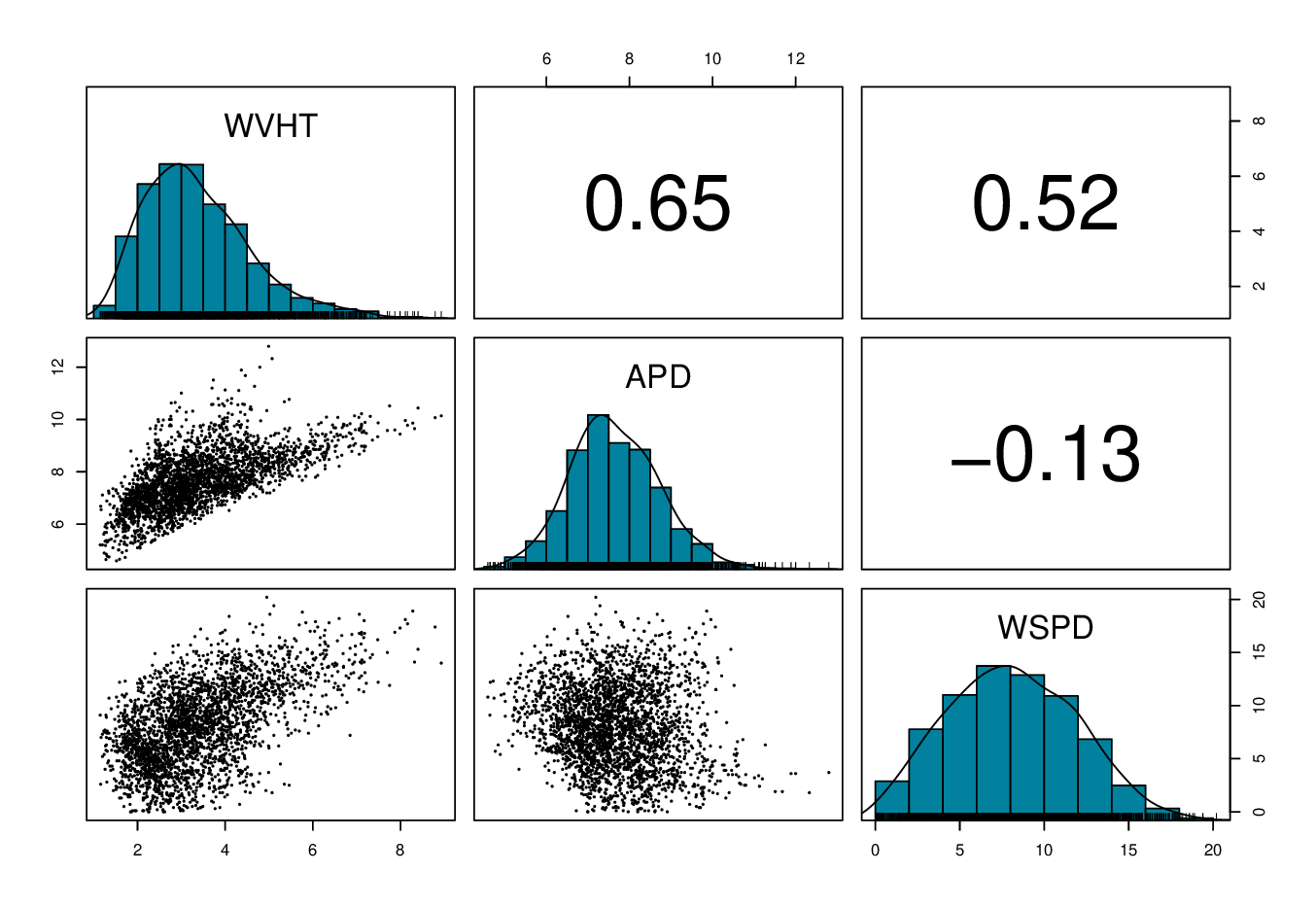}
	\caption{Scatter plot and Spearman's rho between WVHT, APD and WSPD.}
	\label{fig:scatterplot}
\end{figure}

\begin{figure}[H]
	\centering
	\begin{subfigure}{.5\textwidth}
		\includegraphics[width=3in, height=3in]{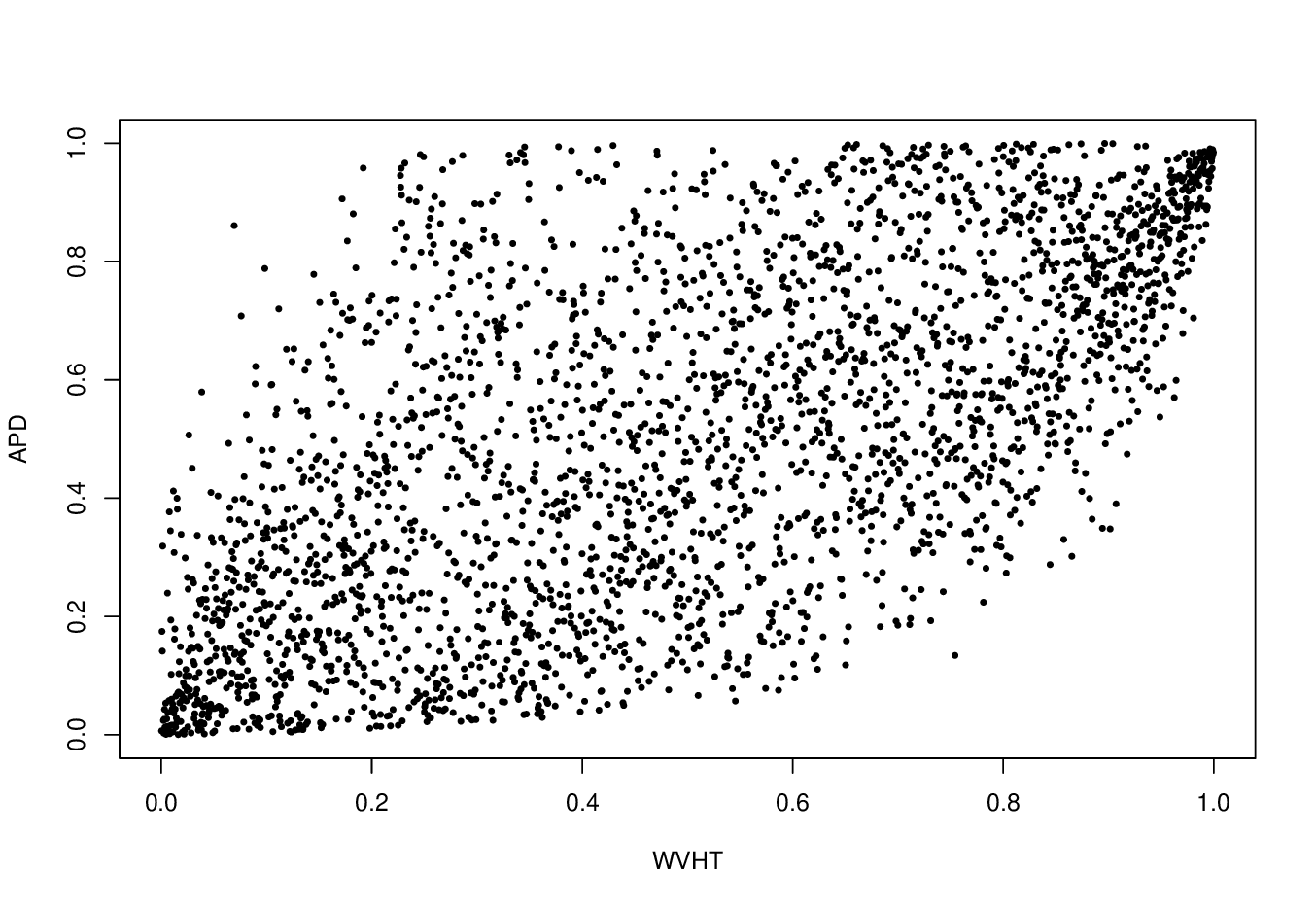}
		\label{fig:period}
	\end{subfigure}%
	\begin{subfigure}{.5\textwidth}
		\includegraphics[width=3in, height=3in]{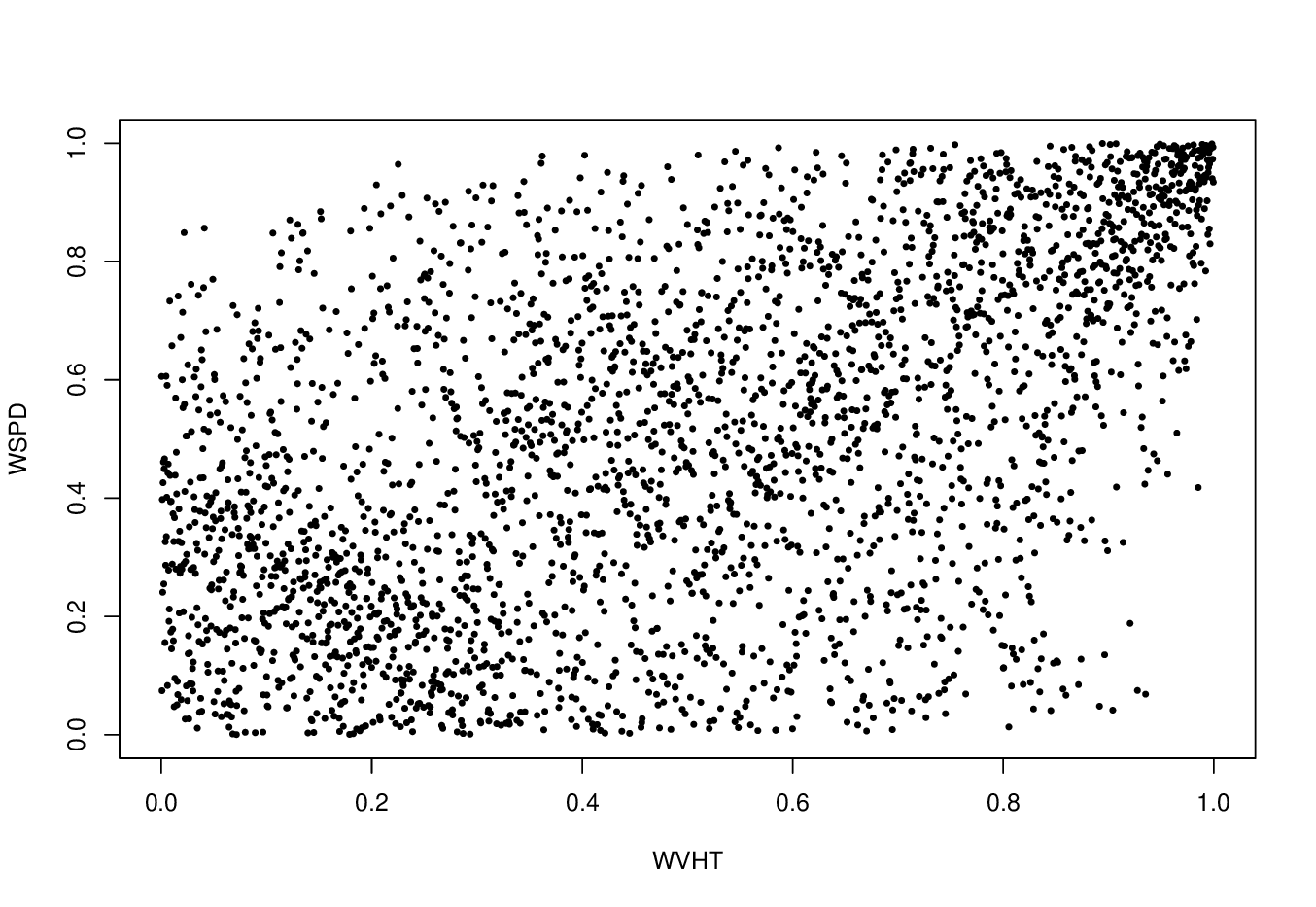}
		\label{fig:speed}
	\end{subfigure}
	\caption{Rank plots of the pairs. \textbf{Left:} WVHT versus APD; \textbf{Right:} WVHT  versus WSPD.} 
	\label{fig:rankplot}
\end{figure}

\begin{table}[H]
	\centering
	\caption{P-values of the tests based on the empirical copula, the empirical Bernstein copula ($m_1=\lfloor n/100\rfloor, m_2=\lfloor n/50\rfloor$) and the empirical beta.}
	\scalebox{0.7}{ 
		\begin{tabular}[t]{lcccccccccc}
			\toprule[1.5pt]
			Pairs&$R_n$&$R_{n,m_1}$ &$R_{n,m_2}$ &$S_n$&$S_{n,m_1}$&$S_{n,m_2}$&$T_n$&$T_{n,m_1}$&$T_{n,m_2}$& $R_n^{\beta}$\\
			\midrule
			(WVHT, APD)&$ 0.000$&     $0.000$&$0.000$ &    $0.002$  &$ 0.000 $&  $0.000$ &  $0.000$&$ 0.000$ &$0.000$&$ 0.000 $ \\
			(WVHT, WSPD) & $ 0.005$&     $0.000$& $0.000$&    $0.022$  &$ 0.045$&$0.005$   &  $0.000$&$ 0.000$ &$0.005$&$ 0.001$ \\
			\bottomrule[1.5pt]
		\end{tabular}
	}
	\label{table:5}
\end{table}

\n To apply the proposed testing procedure, the Bernstein polynomial order was taken to be $m\in \left\{\lfloor 2118/100\rfloor, \right. $\\ $ \left. \lfloor 2118/50\rfloor\right\}$, and other settings (the nominal level, number of multiplier bootstrap replicates $H$ and $20 \times 20 $ grids on $[0, 1]^2$) were the same as in the simulation study. From Table~\ref{table:5}, the Bernstein tests, empirical beta test and empirical tests reach the same conclusion that these two pairs have an asymmetric dependence pattern.

Once the asymmetric characteristic of dependence is confirmed, all the symmetric copula families (which is a huge proportion) should be ruled out for modelling. That is, to construct the copula function of $\rm{(WVHT, APD)}$ and $\rm{(WVHT, WSPD)}$, only asymmetric copulas should be considered. Naturally, one can use the Khoudraji's device in Section~\ref{sec:Sim} to obtain asymmetric copula models. In addition, there are various methods available in the literature, for example the product method in~\cite{Liebscher2008} and the linear convex combination method in~\cite{Wu2014}.

\subsection{Nutrient data application}

A sample of daily intake (in mg) of calcium (Ca), iron (Fe), protein (Pr), vitamin A (vA) and vitamin C (vC) for $n=747$ women aged
between 25 and 50 years has been collected by the United States Department of Agriculture in 1985.
The nutrient data set (see also,~\cite{Genest2012, Bahraoui2018}) is revisited to check the performance of the proposed Bernstein tests. In terms of the suggestion for Bernstein order $m$ in~\cite{Segers2017}, three candidates of $m$ are chosen, $i.e.$, $m=\left\{\lfloor n/30 \rfloor=24, \lfloor n/70 \rfloor=10, \lfloor n/90 \rfloor=8\right\}$, with number of grids $N=20$. 

To make a fair and concise comparison, the results of the characteristic function test $S_{n, \sigma^*}^N$ in~\cite{Bahraoui2018} are also reported here as a competitor. Since P-values therein are computed based on a large amount of bootstraps, it is reasonable to consider a similarly large bootstrap sample, accordingly, $H=5000$ is considered in this  subsection.

Table~\ref{table:6} indicates a surprisingly different conclusion of the Bernstein tests with $m=\lfloor n/70 \rfloor$, that is, only Bernstein tests can reject the pair (Fe, vC) under $5\%$ nominal level. For other pairs, all the tests arrive at the same conclusion. Moreover, the time of calculating the P-values is also recorded based on a desktop PC with processor \textit{11th Gen Intel(R) Core(TM) i5-11400F}  and graphics \textit{NVIDA GeForce RTX 3060} with $15$ repetitions (for time mean and standard deviation). It is important to emphasize that the $S_n$ test is conducted using the \verb+exchTest+ function within the \textbf{R} \verb+copula+ package, which is known for its high optimization. In terms of computational efficiency, the $S_{n, \sigma^*}^N$ test stands out as the most efficient choice.

\begin{table}[H]
	\setlength{\tabcolsep}{9pt}
	\centering
	\caption{P-values and computational performance of empirical tests, empirical beta copula tests and Bernstein tests ($m_1=\lfloor n/30\rfloor, m_2=\lfloor n/70\rfloor, m_3=\lfloor n/90\rfloor$).  Characteristic function test $S^N_{n, \sigma^*}$ in~\citet[]{Bahraoui2018} is also reported here.}
	
	\scalebox{0.65}{ 
		\begin{tabular}[t]{lccccccccccccccccc}
			\toprule[1.5pt]
			\multirow{2}{*}{}&\multirow{2}{*}{Pair}&\multicolumn{14}{c}{Statistics}\\
			\cline{3-16} 
			&&$R_n$&$R_{n,m_1}$&$R_{n,m_2}$&$R_{n,m_3}$ &$S_n$&$S_{n,m_1}$&$S_{n,m_2}$&$S_{n,m_3}$&$T_n$&$T_{n,m_1}$&$T_{n,m_2}$&$T_{n,m_3}$&$R_n^\beta$&$S^N_{n, \sigma^*}$\\

			\midrule
			
			\multirow{10}{*}{P-values}
			&(Ca, Fe)&0.010&0.005&0.007 &0.000&0.003& 0.005&0.008&0.000& 0.002& 0.006&0.007&0.000&0.006& 0.004\\
			&(Ca, Pr)&   0.000     &0.000 &0.000&0.000&0.000&0.000&0.000&     0.000& 0.001&     0.000&0.000&0.000&0.000& $0.000$\\
			&(Ca, vA)& 0.000    &0.000 &0.000  &0.000   &0.000     &0.000& $0.000$&0.000&0.000&0.000&0.000&0.000&0.000&0.000	\\
			&(Ca, vC)&0.143 &0.166 &0.321&0.216&0.158 &0.169 &0.317&0.223&0.065 &0.092&0.276&0.171&0.151& $0.225$\\
			&(Fe, Pr)&0.304 &0.796& 0.956&0.996&0.394 &0.678& 0.954&0.993&0.114 &0.529&0.943&0.998&0.501& $0.895$	\\
			&(Fe, vA)&0.003& 0.004 &0.022&0.005&0.000&0.004 &0.017&0.005&0.005 &0.002& 0.009&0.002&0.001 &$0.008$	\\
			&(Fe, vC)& 0.008& 0.006&0.012&0.004&0.005& 0.005 &0.009&0.004&0.012 &0.012& 0.013&0.004&0.007&$0.007$\\
			&(Pr, vA)&0.007& 0.006&0.029&0.003& 0.001& 0.005& 0.020&0.002&0.024& 0.013&0.029&0.004&0.005& $0.006$\\
			&(Pr, vC)& 0.194& 0.092& $\mathbf{0.036}$&0.143&0.112&0.089&$\mathbf{0.034}$&0.133& 0.070& 0.240&0.052&0.147&0.153&$0.064$\\
			&(vA, vC) &0.549 &0.834& 0.884&0.927&0.602 &0.828& 0.884&0.922&0.180& 0.774&0.913&0.940&0.624&$0.827$ \\
			\midrule
			\multirow{9}{*}{Time}
						&(Ca, Fe)&16.083&159.849&30.219&20.336&24.947& 434.192&81.676&54.421& 22.694&168.899&35.939&25.630& $640.700$&9.014\\
									\multirow{8}{*}{mean}
			&(Ca, Pr)&  15.973    &159.809 &30.560&20.378&24.818&433.909&82.875& 54.546  &22.779&     168.868&35.902&25.256& $830.757$&8.904\\
					\multirow{7}{*}{(secs)}
			&(Ca, vA)& 15.987&159.800&29.610&20.204&24.807 &438.175&79.886 &55.000&25.505&186.549&34.852&25.584&$613.265$&8.276	\\
			&(Ca, vC)&16.710 &163.675&29.626&19.996&26.679 &442.293 &79.889&53.352&23.705 &171.107&34.833&24.712& $631.973$&8.237\\
			&(Fe, Pr)&16.327&160.460&29.553&19.883&25.110&435.617& 79.913&53.364&23.062 &169.563&34.848&24.661& $765.784$&8.238\\
			&(Fe, vA)&16.230& 168.632 &29.685&19.863&31.671&455.719 &79.912&53.367&23.760 &172.449&34.846&24.638&$594.638$&8.183	\\
			&(Fe, vC)& 16.880& 170.615&29.549&19.960&32.841& 444.503&79.911&53.339&22.883 &169.118&34.806&24.664&$570.436$&8.182\\
			&(Pr, vA)&17.143&159.834&29.605&19.933&25.050 &460.171 &79.893 &53.344&28.486&174.660&34.825&24.637& $556.674$&8.309\\
			&(Pr, vC)& 17.843&175.920 & 29.651&20.661&27.116&439.817&79.898&55.361& 23.242&170.081 &34.830&24.671&$565.487$&8.470\\
			&(vA, vC) &16.571&160.764& 29.589&20.425& 27.435&461.440& 79.921&54.761&28.622& 171.081&34.799&24.686&$608.412$ &8.416\\
			\midrule
			\multirow{9}{*}{Time}
						&(Ca, Fe)&0.080&0.206&0.085 &0.118&0.098& 1.216&0.577&0.271& 0.027& 0.129&0.242&0.589& $66.092$&0.223\\
									\multirow{8}{*}{standard}
			&(Ca, Pr)&  0.059& 0.106 &0.244&0.059&0.057&0.120&0.596& 0.295    & 0.063&     0.110&0.237&0.231& $212.529$&0.175\\
				\multirow{7}{*}{deviation}
			&(Ca, vA)& 0.067    &0.087&0.062 &0.317 &0.053     &4.873&0.075 &0.448&1.605&6.344&0.090&0.324&$36.873$&0.324\\
			&(Ca, vC)&0.073& 0.929&0.082&0.073&0.227&2.589 &0.047& 0.091&0.128&1.348&0.070&0.143& $81.312$&0.114\\
			&(Fe, Pr)& 0.067 &0.093& 0.014&0.014&0.087 &0.115&0.102 &0.066&0.038 &0.130&0.091& 0.121& $114.724$&0.121	\\
			&(Fe, vA)&0.121&  7.014&0.064&0.076&2.718&19.865&0.065&0.065&0.053&0.926&0.099 &0.097&$62.509$&0.079	\\
			&(Fe, vC)& 0.238& 7.008&0.013&0.100&0.409& 13.617 &0.123&0.072&0.065&0.154& 0.108&0.141&$7.012$&0.067\\
			&(Pr, vA)&0.708&0.129 &0.053&0.092& 0.101&24.851& 0.068&0.042&0.384&2.948&0.096& 0.100& $5.417$&0.143\\
			&(Pr, vC)& 0.351& 4.343& 0.124&0.177&0.485&12.610&0.062&0.287&0.147 &0.449 &0.080&0.142&$7.623$&0.271\\
			&(vA, vC) &0.277 &0.376&0.047 &0.139&0.157 &27.036&0.128&0.198&0.505&1.337 &0.121&0.097&$82.739$ &0.119\\
			\bottomrule[1.5pt]
		\end{tabular}
	} 
	
	{\raggedright  \scriptsize \hspace{0.5ex} The bold values indicate a different decision for Bernstein tests comparing  with other tests under $5\%$ nominal level. \par}
	
	\label{table:6}
\end{table}

\section{Final remarks \label{sec:conclusion}}

\n Tests based on the Bernstein polynomials for symmetry of bivariate copulas were proposed and investigated. These test statistics are  smoothed versions of those based on the empirical copula in~\cite{Genest2012}. The proposed procedure exhibits enhanced performance in simulation studies and aligns with identical conclusions in real data applications across the majority of scenarios. The limiting distributions of the proposed test statistics were investigated and a Bernstein version of multiplier bootstrap was constructed and implemented to simulate P-values.

Since the underlying copula is continuous, a smooth copula estimator such as the empirical Bernstein copula is competitive with the empirical copula. From the bias-variance trade-off point of view, with appropriate smoothness parameter, the former can outperform the latter by balancing the bias and variance. On the other hand, the smoothed Bernstein tests still hold the same features as the non-smoothed empirical tests. For example, the empirical tests tend to have an empirical level which is below the nominal level, see~\cite{Genest2012} and ~\cite{Bahraoui2018}. The proposed Bernstein tests still undergo this pattern, but are less affected.

For future study, it would be possible to apply the Bernstein polynomials for testing various kinds of symmetry in~\cite{Nelsen1993} and the vertex and diametrical symmetry developed in~\cite{Mangold2017} recently. In general, other statistical tests based on the empirical copula can be adapted easily to use the empirical Bernstein copula. Another avenue to explore involves adopting distinct Bernstein orders for each component of the empirical Bernstein copula. Drawing from the authors' experiential insights, such alternatives are likely to offer benefits, particularly concerning $R_{n,m}$ and $T_{n,m}$.

\section{Acknowledgements}
M. Belalia gratefully acknowledge the research support
of the Natural Sciences and Engineering Research Council of Canada(RGPIN/05496-2020 ).

\appendix

\section{Proof of Theorem~\ref{thm:2}}
\begin{proof}
	This proof is an adaption of~\citet[Proposition 3]{Genest2012}. For the convergence of $nR_{n, m}$ and $n^{1/2}T_{n, m}$, one can directly apply continuous mapping theorem combined with Theorem~\ref{thm:1}. For the convergence of $nS_{n, m}$, the functional delta method was used. To this end, let $\mathscr{C}[0, 1]^2$ denote the space  of continuous functions on $[0, 1]^2$, $\mathscr{D}[0, 1]^2$ denote the space of functions with continuity from upper right quadrant and limits from other quadrants on $[0, 1]^2$, equipped with sup-norm. Further, denote $\rm{BV}_1[0, 1]^2$ by the subspace of $\mathscr{D}[0, 1]^2$ where functions with total variation bounded by $1$. By continuous mapping theorem, 
	\begin{equation*}
		\left(\bS^2_{n, m}, \BB_{n, m}\right)\rightsquigarrow \left(\bS_C^2, \BB_C\right)
	\end{equation*}
	in the space $\ell^{\infty}[0, 1]^2\times \ell^{\infty}[0, 1]^2$. Rewrite it as 
	\begin{equation*}
		\left(\bS^2_{n, m}, \BB_{n, m}\right)=n^{1/2}\{(A_{n, m}, \widehat{C}_{n, m})-(A, C)\},
	\end{equation*}
	where $A\equiv0$ and $A_{n, m}:=n^{1/2}(\widehat{C}_{n, m}-\widehat{C}_{n, m}^{\top})^2$, where $ \widehat{C}_{n, m}^{\top}(u,v) = \widehat{C}_{n, m}(v,u) $. Then, consider the map $\phi: \ell^{\infty}[0, 1]^2\times \rm{BV}_1[0, 1]^2\to \RR$ defined by 
	\begin{equation*}
		\phi(a, b)=\int_{(0, 1]^2}a \dif b,
	\end{equation*}
	Clearly, 
	\begin{equation*}
		nS_{n, m}=\phi	\left(\bS^2_{n, m}, \BB_{n, m}\right)=n^{1/2}\{\phi(A_{n, m}, \widehat{C}_{n, m})-\phi(A, C)\}.
	\end{equation*}
	To conclude the proof, by~\citet[Lemma 4.3]{Carabarin-Aguirre2010}, $\phi$ is Hadamard differentiable tangentially to $\mathscr{C}[0, 1]^2\times \mathscr{D}[0, 1]^2$ at each $(\alpha, \beta)$ in $\ell^{\infty}[0, 1]^2\times \rm{BV}_1[0, 1]^2$ such that $\int |d\alpha|<\infty$ with derivative 
	\begin{equation*}
		\phi'_{(\alpha, \beta)}(a, b)=\int\alpha \dif b+\int a  \dif \beta.
	\end{equation*}
	Then by applying the Functional Delta Method~\citep[Theorem 3.9.4]{Van_der_vaart1996}, $nS_{n, m}\rightsquigarrow\phi'_{(A, C)}\left(\bS_C^2, \BB_C \right)$, where 
	\begin{equation*}
		\phi'_{(A, \,C)}\left(\bS_C^2, \BB_C \right)=\int_{(0, 1]^2}A \dif \BB_C+\int_{(0, 1]^2}\bS_C^2 \dif C=\int_{(0, 1]^2}\bS_C^2 \dif C.
	\end{equation*}
	This yields to the desired result.
\end{proof}
\section{Proof of Theorem~\ref{thm:3}}
\begin{proof}
	This proof is an adaption of~\citet[Proposition 4]{Genest2012}.	The strongly uniform consistency of $\widehat{C}_{n,m}$ was provided in \citet[Theorem 1]{Jansen2012}, then by continuous mapping theorem, it follows immediately that $R_{n,m}$ and $T_{n,m}$ converge to $R_C$ and $T_C$ almost surely, respectively. Further, to prove the convergence of $S_{n,m}$, write
	\begin{equation*}
		|S_{n, m}-S_C|\le |\gamma_{n, m}|+|\zeta_{n}|,
	\end{equation*}
	where 
	\begin{align*}
		\gamma_{n, m}&=\int_{0}^1\int_{0}^1\left\{\widehat{C}_{n, m}(u, v)
		-\widehat{C}_{n, m}(v, u)\right\}^2 \dif \widehat{C}_{n}(u, v)\\
		& \quad-\int_{0}^1\int_{0}^1\left\{{C}(u, v)-{C}(v, u)\right\}^2 \dif \widehat{C}_{n}(u, v),
	\end{align*}
	and 
	\begin{align*}
		\zeta_{n}&=\int_{0}^1\int_{0}^1\left\{{C}(u, v)-{C}(v, u)\right\}^2 \dif \widehat{C}_{n}(u, v)-\int_{0}^1\int_{0}^1\left\{{C}(u, v)-{C}(v, u)\right\}^2 \dif{C}(u, v).
	\end{align*}
	Since 
	\begin{align*}
		&\left|\left\{\widehat{C}_{n ,m}(u, v)-\widehat{C}_{n, m}(v, u)\right\}^2-\left\{C(u, v)-C(v, u)\right\}^2\right|\\
		&=\left|\left[\widehat{C}_{n, m}(u, v)+C(u, v)\right]-\left[\widehat{C}_{n ,m}(v, u)+C(v, u)\right]\right|\\
		&\quad \times\left|\left[\widehat{C}_{n, m}(u, v)-C(u, v)\right]-\left[\widehat{C}_{n, m}(v, u)-C(v, u)\right]\right|\\
		&\le \left[\left|\widehat{C}_{n, m}(u, v)+C(u, v)\right|+\left|\widehat{C}_{n ,m}(v, u)+C(v, u)\right|\right]\\
		&\quad \times\left[\left|\widehat{C}_{n, m}(u, v)-C(u, v)\right|+\left|\widehat{C}_{n, m}(v, u)-C(v, u)\right|\right]\\
		&\le  8\sup_{(u,v)\in [0, 1]^2}\left|\widehat{C}_{n, m}(u, v)-C(u, v)\right|,
	\end{align*}
	one has
	\begin{equation*}
		|\gamma_{n ,m}|\le 8\sup_{(u,v)\in [0, 1]^2}\left|\widehat{C}_{n, m}(u, v)-C(u, v)\right|\xrightarrow{a.s.} 0.
	\end{equation*}
	For $\zeta_{n}$, by~\citet[Proposition A.1 (i)]{Genest1995}, one has
	\begin{equation*}
		\int_{0}^1\int_{0}^1\left\{{C}(u, v)-{C}(v, u)\right\}^2 \dif \widehat{C}_{n}(u, v)\rightarrow\int_{0}^1\int_{0}^1\left\{{C}(u, v)-{C}(v, u)\right\}^2 \dif {C}(u, v),
	\end{equation*}
	then $\zeta_{n} \rightarrow0$. Therefore, $S_{n, m}$ converges to $S_C$ almost surely. 
\end{proof}

\section{Proof of Lemma~\ref{prop:2023-08-24, 4:21PM}}\label{pf:newprop}
\begin{proof}
	Under these assumptions, one need to use the framework in~\cite{Segers2017}. Specifically, let $\mu_{m, (u, v)}$ be the law of random vector $(B_1/m, B_2/m)$, where $B_1$ and $B_2$ follow $\textsf{Binomial}(m, u)$ and $\textsf{Binomial}(m, v)$, respectively. The empirical Bernstein copula can be rewritten as
	\begin{equation*}
		\widehat{C}_{n, m}(u, v)=\int_{[0, 1]^2}\widehat{C}_n(x, y)\,\d\mu_{m, (u, v)}(x, y), \qquad (x, y)\in [0 ,1]^2.
	\end{equation*}
	Moreover, write $(x, y)(t)=(u, v)+t((x, y)-(u, v))$ with $t\in [0, 1]$. Then, the empirical Bernstein copula process is
	\begin{align}\label{eq:2023-08-24, 6:52PM}
		\BB_{n, m}(u, v)&=\sqrt{n}\left\{	\widehat{C}_{n, m}(u, v)-C(u, v)\right\}\notag\\
		&=\sqrt{n}\left\{	\widehat{C}_{n, m}(u, v)-\int_{[0, 1]^2}C(x, y)\,\d\mu_{m, (u, v)}(x, y)+\int_{[0, 1]^2}C(x, y)\,\d\mu_{m, (u, v)}(x, y)-C(u, v)\right\}\notag\\
		&=\int_{[0, 1]^2}\sqrt{n}\left\{	\widehat{C}_{n, m}(x, y)-C(x, y)\right\}\,\d\mu_{m, (u, v)}(x, y)+\sqrt{n}\left\{\int_{[0, 1]^2}C(x, y)\,\d\mu_{m, (u, v)}(x, y)-C(u, v)\right\}\notag\\
		&=T_1+T_2.
	\end{align}
	The two terms are dealt with separately.
	\begin{itemize}
		\item For the term $T_1$, according to~\citet[Proposition 3.1]{Segers2017},  one has 
		\begin{equation*}
			\sup\limits_{(u, v)\in [0, 1]^2}\left|\int_{[0, 1]^2}\sqrt{n}\left\{	\widehat{C}_{n, m}(s, t)-C(s, t)\right\}\,\d\mu_{m, (u, v)}(s, t)-\sqrt{n}\left\{	\widehat{C}_{n, m}(u, v)-C(u, v)\right\}\right|=o_p(1).
		\end{equation*}
		And note that, $\sqrt{n}\left\{	\widehat{C}_{n, m}(u, v)-C(u, v)\right\}\rightsquigarrow\BB_C(u, v)$ in $\ell^{\infty}([0, 1]^2)$ under the Assumption~\ref{ass:1} (see~\cite{Segers2012}). Therefore, $T_1\rightsquigarrow \BB_C(u, v)$ as $n$ goes to infinity.
		
		\item For the term $T_2$, Let $m=cn^{\alpha}$ for some $c >0$, by~\citet[Lemma 3.1]{Kojadinovic2022stute},under Assumption~\ref{ass:1}-\ref{ass:2} , one has 
		\begin{align*}
			&\sup_{(u,v)\in [0, 1]^2}\sqrt{n}\left|\int_{[0, 1]^2}C(x, y)\,\d\mu_{m, (u, v)}(x, y)-C(u, v)\right|=O\left(n^{(3-4\alpha)/6}\right)
		\end{align*}
		almost surely. Therefore,  if $\alpha > 3/4$, $T_2$ goes to zero as $n$ goes to infinity.

	\end{itemize}
	Combining above results completes the proof.
\end{proof}

\section{Proof of Proposition~\ref{prop:1}}
\begin{proof}
	By~\cite{Remillard2009}, one has
	\begin{align*}
		\CC_n(u, v)=\sqrt{n}\left\{C_n(u, v)-C(u, v)\right\}&=n^{1/2}\Bigg\{\frac{1}{n}\sum_{i=1}^n\left\{\II\left(U_i\le u, V_i\le v)-C(u, v\right)\right\}\Bigg\}\\
		& \quad  \rightsquigarrow \CC(u ,v),
	\end{align*}
	where $C_n(u,v)=\frac{1}{n}\sum_{i=1}^{n}\II\left(U_i\le u, V_i\le v\right)$, and
	\begin{align*}
		\overline{\mathbb{C}}^{(h)}_n(u, v)
		&=n^{1/2}\Bigg\{\frac{1}{n}\sum_{i=1}^n\left(\xi^{(h)}_i-\bar{\xi}_n^{(h)}\right)\II(\widehat{U}_i\le u, \widehat{V}_i\le v)  \Bigg\}\rightsquigarrow \CC(u ,v).
	\end{align*}
	To end the proof, one needs to show that the difference between $\left(\CC_n, \overline{\mathbb{C}}^{(h)}_n\right)$ and $\left(\widetilde{\BB}_{n, m}, \overline{\BB}^{(h)}_{n, m}\right)$ are asymptotically negligible.
	
	It is well-known (for example, see~\cite{Deheuvels1979}) that, as $n\to \infty$,
	\begin{equation*}
		\|C_n(u, v)-C(u, v)\|=O\left(n^{-1/2}(\log \log n)^{1/2}\right),
	\end{equation*}
	almost surely and using the same techniques in~\cite{Jansen2012} gives 
	\begin{equation*}
		\|C_{n,m}(u, v)-C(u, v)\|=O\left(n^{-1/2}(\log \log n)^{1/2}\right),
	\end{equation*}
	almost surely, it immediately follows that
	\begin{equation*}
		\|C_{n,m}(u, v)-C_n(u, v)\|= O\left(n^{-1/2}(\log \log n)^{1/2}\right),
	\end{equation*}
	almost surely. Therefore, one can conclude that $\widetilde{\BB}_{n,m}(u, v)\rightsquigarrow \CC(u, v)$. 
	
	Further, by~\citet[Lemma 1]{Jansen2012}, as $n\to \infty$, 
	\begin{align*}
		\|\widehat{C}_{n}(u ,v)-C(u, v)\|&=O\left(n^{-1/2}(\log \log n)^{1/2}\right),
	\end{align*}
	almost surely. By Lemma~\ref{prop:2023-08-24, 4:21PM}, under the assumptions, 
	\begin{equation*}
				\|\widehat{C}_{n,m}(u ,v)-C(u, v)\|=o_p(1).
	\end{equation*}
  Hence
	\begin{equation*}
		\|\widehat{C}_{n,m}(u ,v)-\widehat{C}_n(u ,v)\|= o_p(1),
	\end{equation*}
 one has that $		\overline{\BB}_{n,m}^{(h)}(u, v) \rightsquigarrow \CC(u ,v)$. 
\end{proof}

\section{Proof of Proposition~\ref{prop:2}}
\begin{proof}
	We only show the uniform consistency for 
	\begin{align*}
		\frac{\partial \widehat{C}_{n,m}(u, v)}{\partial u}&=m\sum_{k=0}^{m-1}\sum_{\ell=0}^m\bigg\{\widehat{C}_n\left(\frac{k+1}{m}, \frac{\ell}{m}\right)-\widehat{C}_n\left(\frac{k}{m}, \frac{\ell}{m}\right)\bigg\}\\
		&\quad \cdot P_{m-1, k}(u)P_{m, \ell}(v).
	\end{align*}
	The result for the other partial derivative can be obtained similarly. For any $u\in [b_n, 1-b_n], v\in [0, 1]$, one has
	\begin{align*}
		&\left|\frac{\partial \widehat{C}_{n,m}(u, v)}{\partial u}-\dot{C}_1(u, v)\right|\\
		&\quad\leq\Bigg|m\sum_{k=0}^{m-1}\sum_{\ell=0}^m\bigg\{\widehat{C}_n\left(\frac{k+1}{m}, \frac{\ell}{m}\right)-\widehat{C}_n\left(\frac{k}{m}, \frac{\ell}{m}\right)\\
		&\qquad -{C}\left(\frac{k+1}{m}, \frac{\ell}{m}\right)+{C}\left(\frac{k}{m}, \frac{\ell}{m}\right)\bigg\}P_{m-1, k}(u)P_{m,\ell}(v)\Bigg|\\
		&\qquad  +\Bigg|m\sum_{k=0}^{m-1}\sum_{\ell=0}^m\bigg\{{C}\left(\frac{k+1}{m}, \frac{\ell}{m}\right)-{C}\left(\frac{k}{m}, \frac{\ell}{m}\right)\bigg\}\\
		&\qquad \cdot P_{m-1, k}(u)P_{m, \ell}(v)-\dot{C}_1(u, v)\Bigg|\\
		&\quad =A_1+A_2.
	\end{align*}
	Further, let  $P_{m, k}'(u)=  \frac{1}{u(1-u)}P_{m, k}(u)\left(k -mu\right)$ be the derivative of $P_{m, k}(u)$, then
	\begin{align*}
		A_1&\leq \sum_{k=0}^{m}\sum_{\ell=0}^m\bigg|\widehat{C}_n\left(\frac{k}{m}, \frac{\ell}{m}\right)-{C}\left(\frac{k}{m}, \frac{\ell}{m}\right)\bigg|\\
		&\quad \cdot \left|P_{m, k}'(u)\right| P_{m, \ell}(v)\\
		&\le \sup_{(u,v)\in [0, 1]^2}\bigg|\widehat{C}_n\left(u, v\right)-{C}\left(u, v\right)\bigg|\cdot  \sum_{k=0}^m\left|P_{m, k}'(u)\right|\\
		&=O\left(m^{1/2}n^{-1/2}(\log\log n)^{1/2}\right),
	\end{align*}
	almost surely as $n\to \infty$ and where 
	\begin{equation*}
		\sum_{k=0}^m\left|P_{m, k}'(u)\right|=O(m^{1/2}),
	\end{equation*}
	by~\citet[Lemma 1]{Jansen2014}. 
	
	For dealing with $A_2$, let $\nu_{m, (u, v)}$ be the law of random vector $(S_1/(m-1), S_2/m)$, where $S_1$ and $S_2$ follow $\textsf{Binomial}(m-1, u)$ and $\textsf{Binomial}(m, v)$, respectively.  Therefore, 
	\begin{align*}
		&\sum_{k=0}^{m-1}\sum_{\ell=0}^mC\left(\frac{k+1}{m}, \frac{\ell}{m}\right)P_{m-1, k}(u)P_{m, \ell}(v)\\
		&\quad =\int_{[0, 1]^2} C\left(\left(x+\frac{1}{m-1}\right)\frac{m-1}{m}, y\right)\\
		&\qquad \dif \nu_{m, (u, v)}(x, y),
	\end{align*}
	and 
	\begin{align*}
		&\sum_{k=0}^{m-1}\sum_{\ell=0}^mC\left(\frac{k}{m}, \frac{\ell}{m}\right)P_{m-1, k}(u)P_{m, \ell}(v)\\
		&\quad =\int_{[0, 1]^2} C\left(x\frac{m-1}{m}, y\right)\dif \nu_{m, (u, v)}(x, y).
	\end{align*}
	Using the representation in~\citet[Proof of Proposition 3.4]{Segers2017}, for $0< t <1$, one has 
	\begin{align}\label{eq:2023-08-28, 1:05PM}
		A_2&=
		\Bigg|m\sum_{k=0}^{m-1}\sum_{\ell=0}^m\bigg\{{C}\left(\frac{k+1}{m}, \frac{\ell}{m}\right)-{C}\left(\frac{k}{m}, \frac{\ell}{m}\right)\bigg\}\notag \\
		&\quad \cdot P_{m-1, k}(u)P_{m, \ell}(v)-\dot{C}_1(u, v)\Bigg|\notag\\
		&=\left|\int_0^1\bigg\{\int_{[0, 1]^2} \left[\dot{C}_1\left(\frac{m-1}{m}x+\frac{1+t}{m}, y\right)-\dot{C}_1(u, v)\right] \right.\notag\\
		&\quad\left. \d\nu_{m, (u, v)}(x, y) \bigg\} \d t\right|.
	\end{align}
	Let  $x'(x, t):= \frac{m-1}{m}x+\frac{1+t}{m}$ and $\varepsilon_n= b_n/2$, then one has,
	\begin{align*}
		\eqref{eq:2023-08-28, 1:05PM}&\le\left|\int_0^1\bigg\{\int_{[0, 1]^2} \left[\dot{C}_1\left(x', y\right)-\dot{C}_1(u, v)\right] \right.\\
		&\quad \cdot\II\left(\max(|x'-u|, |y-v|)\le \varepsilon_n\right)\\
		&\quad \left.\dif\nu_{m, (u, v)}(x, y) \bigg\} \dif t\right|\\
		&\quad +\left|\int_0^1\bigg\{\int_{[0, 1]^2} \left[\dot{C}_1\left(x', y\right)-\dot{C}_1(u, v)\right]\right. \\
		&\quad\left. \cdot \II\left(\max(|x'-u|, |y-v|)> \varepsilon_n\right) \dif\nu_{m, (u, v)}(x, y) \bigg\} \d t\right|\\
		&= A_{21}+A_{22}.
	\end{align*}
	Further, the two terms are dealt with separately using the strategy in~\citet[Proof of Lemma 3.1]{Kojadinovic2022stute}.
	\begin{itemize}
		\item For $A_{21}$, under Assumption~\ref{ass:1}-\ref{ass:2}, by~\citet[Lemma 4.3]{Segers2012}, for a constant $L>0$, one has 
		\begin{align*}
			&\left|\dot{C}_1\left(x', y\right)-\dot{C}_1(u, v)\right|\II\left(\max(|x'-u|, |y-v|)\le \varepsilon_n\right)\\
			&\quad \le Lb_n^{-1}\left[|x'-u|+|y-v|\right].
		\end{align*}
		Further, 
		\begin{align}\label{eq:2023-08-28, 2:20PM}
			A_{21}
			&\le Lb_n^{-1}\int_0^1\bigg\{\int_{[0, 1]^2} \left[|x'-u|+|y-v|\right]\notag\\
			&\quad \d\nu_{m, (u, v)}(x, y) \bigg\} \dif t\notag\\
			&\le Lb_n^{-1} \int_{[0, 1]^2} \int_{0}^{1} \left[|x-u|+|y-v|+\left|\frac{x}{m}-\frac{1+t}{m}\right|\right]\notag\\
			&\quad \dif t \dif\nu_{m, (u, v)}(x, y)\notag\\
			&\le Lb_n^{-1} \int_{[0, 1]^2} \int_{0}^{1} \left[|x-u|+|y-v|\right.\notag\\
			&\quad \left.+\frac{x}{m}+\frac{1+t}{m}\right] \dif t \dif\nu_{m, (u, v)}(x, y)\notag\\
			&=Lb_n^{-1} \int_{[0, 1]^2} \left[|x-u|+|y-v|+\frac{x}{m}+\frac{3}{2m}\right]  \dif\nu_{m, (u, v)}(x, y).
		\end{align}
		By Cauchy-Schwarz inequality,
		\begin{align*}
			\eqref{eq:2023-08-28, 2:20PM}&\le Lb_n^{-1}\left[O\left(m^{-1/2}\right)+O(m^{-1})\right]\\
			&=O\left(b_n^{-1}m^{-1/2}\right).
		\end{align*}
		
		\item For $A_{22}$, since $0\le \dot{C}_1\le 1$ and using the result of $A_{21}$, 
		\begin{align*}
			A_{22}	&\le \left|\int_0^1\bigg\{\int_{[0, 1]^2} \frac{2}{\varepsilon_n}\max(|x'-u|, |y-v|)\right. \\
			&\quad\left. \dif\nu_{m, (u, v)}(x, y) \bigg\} \d t\right|\\
			&\le  \left|\int_0^1\bigg\{\int_{[0, 1]^2}\frac{2}{\varepsilon_n}(|x'-u|+|y-v|)\right.\\
			&\quad \left. \dif\nu_{m, (u, v)}(x, y) \bigg\} \d t\right|\\
			&\le \varepsilon_n^{-1}\left[O\left(m^{-1/2}\right)+O(m^{-1})\right]\\
			&=O\left(b_n^{-1}m^{-1/2}\right).
		\end{align*}
	\end{itemize}
	Therefore,
	\begin{align*}
		&\sup_{\substack{v\in [0, 1],\\u\in [b_n, 1-b_n]}}\left|	\frac{\partial \widehat{C}_{n,m}(u, v)}{\partial u}-\dot{C}_1(u, v)\right|\\
		&\quad =O\left(m^{1/2}n^{-1/2}(\log\log n)^{1/2}\right)+O\left(b_n^{-1}m^{-1/2}\right),
	\end{align*}
	almost surely as $n\to \infty$, which completes the proof.
\end{proof}

%
%

\color{black}

\bibliographystyle{chicago}
\bibliography{Mybib}	

\begin{thebibliography}{}

\bibitem[\protect\citeauthoryear{Bahraoui, Bouezmarni, and Quessy}{Bahraoui
  et~al.}{2018}]{Bahraoui2018}
Bahraoui, T., T.~Bouezmarni, and J.-F. Quessy (2018).
\newblock Testing the symmetry of a dependence structure with a characteristic
  function.
\newblock {\em Dependence Modeling\/}~{\em 6\/}(1), 331--355.

\bibitem[\protect\citeauthoryear{Bahraoui and Quessy}{Bahraoui and
  Quessy}{2022}]{Bahraoui&Quessy2021}
Bahraoui, T. and J.-F. Quessy (2022).
\newblock Tests of multivariate copula exchangeability based on {L}{\'e}vy
  measures.
\newblock {\em Scandinavian Journal of Statistics\/}~{\em 49\/}(3), 1215--1243.

\bibitem[\protect\citeauthoryear{Beare and Seo}{Beare and
  Seo}{2020}]{Beare2020}
Beare, B.~K. and J.~Seo (2020).
\newblock Randomization tests of copula symmetry.
\newblock {\em Econometric Theory\/}~{\em 36\/}(6), 1025--1063.

\bibitem[\protect\citeauthoryear{Belalia}{Belalia}{2016}]{Belalia2016}
Belalia, M. (2016).
\newblock On the asymptotic properties of the {B}ernstein estimator of the
  multivariate distribution function.
\newblock {\em Statistics and Probability Letters\/}~{\em 110}, 249--256.

\bibitem[\protect\citeauthoryear{Belalia, Bouezmarni, Lemyre, and
  Taamouti}{Belalia et~al.}{2017}]{Belalia2017b}
Belalia, M., T.~Bouezmarni, F.~C. Lemyre, and A.~Taamouti (2017).
\newblock Testing independence based on {B}ernstein empirical copula and copula
  density.
\newblock {\em Journal of Nonparametric Statistics\/}~{\em 29\/}(2), 346--380.

\bibitem[\protect\citeauthoryear{Carabarin-Aguirre and
  Ivanoff}{Carabarin-Aguirre and Ivanoff}{2010}]{Carabarin-Aguirre2010}
Carabarin-Aguirre, A. and B.~G. Ivanoff (2010).
\newblock Estimation of a distribution under generalized censoring.
\newblock {\em Journal of Multivariate Analysis\/}~{\em 101\/}(6), 1501--1519.

\bibitem[\protect\citeauthoryear{Chen and Huang}{Chen and
  Huang}{2007}]{Cheng2007}
Chen, S.~X. and T.-M. Huang (2007).
\newblock Nonparametric estimation of copula functions for dependence
  modelling.
\newblock {\em Canadian Journal of Statistics\/}~{\em 35\/}(2), 265--282.

\bibitem[\protect\citeauthoryear{Cousido-Rocha, de~U{\~n}a-{\'A}lvarez, and
  Hart}{Cousido-Rocha et~al.}{2019}]{Cousido-Rocha2019}
Cousido-Rocha, M., J.~de~U{\~n}a-{\'A}lvarez, and J.~D. Hart (2019).
\newblock A two-sample test for the equality of univariate marginal
  distributions for high-dimensional data.
\newblock {\em Journal of Multivariate Analysis\/}~{\em 174}, 104537.

\bibitem[\protect\citeauthoryear{Deheuvels}{Deheuvels}{1979}]{Deheuvels1979}
Deheuvels, P. (1979).
\newblock La fonction de d{\'e}pendance empirique et ses propri{\'e}t{\'e}s.
  {U}n test non param{\'e}trique d'ind{\'e}pendance.
\newblock {\em Bulletins de l'Acad{\'e}mie Royale de Belgique\/}~{\em 65},
  274--292.

\bibitem[\protect\citeauthoryear{Fermanian, Radulovi\'{c}, and
  Wegkamp}{Fermanian et~al.}{2004}]{Fermanian2004}
Fermanian, J.-D., D.~Radulovi\'{c}, and M.~Wegkamp (2004).
\newblock Weak convergence of empirical copula processes.
\newblock {\em Bernoulli\/}~{\em 10\/}(5), 847--860.

\bibitem[\protect\citeauthoryear{Fujii}{Fujii}{1989}]{Fujii1989}
Fujii, Y. (1989).
\newblock Test for the equality of marginal distributions on positively
  dependent bivariate survival data.
\newblock {\em Communications in Statistics-Simulation and Computation\/}~{\em
  18\/}(2), 633--642.

\bibitem[\protect\citeauthoryear{Genest, Ghoudi, and Rivest}{Genest
  et~al.}{1995}]{Genest1995}
Genest, C., K.~Ghoudi, and L.-P. Rivest (1995).
\newblock A semiparametric estimation procedure of dependence parameters in
  multivariate families of distributions.
\newblock {\em Biometrika\/}~{\em 82\/}(3), 543--552.

\bibitem[\protect\citeauthoryear{Genest, Ne{\v{s}}lehov{\'a}, and
  Quessy}{Genest et~al.}{2012}]{Genest2012}
Genest, C., J.~Ne{\v{s}}lehov{\'a}, and J.-F. Quessy (2012).
\newblock Tests of symmetry for bivariate copulas.
\newblock {\em Annals of the Institute of Statistical Mathematics\/}~{\em
  64\/}(4), 811--834.

\bibitem[\protect\citeauthoryear{Genest, Ne{\v{s}}lehov{\'a}, and
  R{\'e}millard}{Genest et~al.}{2017}]{Genest2017}
Genest, C., J.~G. Ne{\v{s}}lehov{\'a}, and B.~R{\'e}millard (2017).
\newblock Asymptotic behavior of the empirical multilinear copula process under
  broad conditions.
\newblock {\em Journal of Multivariate Analysis\/}~{\em 159}, 82--110.

\bibitem[\protect\citeauthoryear{Genest, Remillard, and Beaudoin}{Genest
  et~al.}{2009}]{Genest2009}
Genest, C., B.~Remillard, and D.~Beaudoin (2009).
\newblock Goodness-of-fit tests for copulas: A review and a power study.
\newblock {\em Insurance: Mathematics and Economics\/}~{\em 44\/}(2), 199--213.

\bibitem[\protect\citeauthoryear{Gijbels and Mielniczuk}{Gijbels and
  Mielniczuk}{1990}]{Gijbels1990}
Gijbels, I. and J.~Mielniczuk (1990).
\newblock Estimating the density of a copula function.
\newblock {\em Communications in Statistics-Theory and Methods\/}~{\em
  19\/}(2), 445--464.

\bibitem[\protect\citeauthoryear{Grimaldi and Serinaldi}{Grimaldi and
  Serinaldi}{2006}]{Grimaldi2006}
Grimaldi, S. and F.~Serinaldi (2006).
\newblock Asymmetric copula in multivariate flood frequency analysis.
\newblock {\em Advances in Water Resources\/}~{\em 29\/}(8), 1155--1167.

\bibitem[\protect\citeauthoryear{Harder and Stadtm{\"u}ller}{Harder and
  Stadtm{\"u}ller}{2017}]{Harder2017}
Harder, M. and U.~Stadtm{\"u}ller (2017).
\newblock Testing exchangeability of copulas in arbitrary dimension.
\newblock {\em Journal of Nonparametric Statistics\/}~{\em 29\/}(1), 40--60.

\bibitem[\protect\citeauthoryear{Hofert, Kojadinovic, Maechler, and Yan}{Hofert
  et~al.}{2023}]{Copula}
Hofert, M., I.~Kojadinovic, M.~Maechler, and J.~Yan (2023).
\newblock {\em copula: Multivariate Dependence with Copulas}.
\newblock R package version 1.1-2.

\bibitem[\protect\citeauthoryear{Janssen, Swanepoel, and Veraverbeke}{Janssen
  et~al.}{2012}]{Jansen2012}
Janssen, P., J.~Swanepoel, and N.~Veraverbeke (2012).
\newblock Large sample behavior of the {B}ernstein copula estimator.
\newblock {\em Journal of Statistical Planning and Inference\/}~{\em 142\/}(5),
  1189--1197.

\bibitem[\protect\citeauthoryear{Janssen, Swanepoel, and Veraverbeke}{Janssen
  et~al.}{2014}]{Jansen2014}
Janssen, P., J.~Swanepoel, and N.~Veraverbeke (2014).
\newblock A note on the asymptotic behavior of the {B}ernstein estimator of the
  copula density.
\newblock {\em Journal of Multivariate Analysis\/}~{\em 124}, 480--487.

\bibitem[\protect\citeauthoryear{Janssen, Swanepoel, and Veraverbeke}{Janssen
  et~al.}{2016}]{Janssen2016}
Janssen, P., J.~Swanepoel, and N.~Veraverbeke (2016).
\newblock Bernstein estimation for a copula derivative with application to
  conditional distribution and regression functionals.
\newblock {\em Test\/}~{\em 25\/}(2), 351--374.

\bibitem[\protect\citeauthoryear{Jaser and Min}{Jaser and
  Min}{2021}]{Jaser2020}
Jaser, M. and A.~Min (2021).
\newblock On tests for symmetry and radial symmetry of bivariate copulas
  towards testing for ellipticity.
\newblock {\em Computational Statistics\/}~{\em 36\/}(3), 1--26.

\bibitem[\protect\citeauthoryear{Khoudraji}{Khoudraji}{1995}]{Khoudraji:1995}
Khoudraji, A. (1995).
\newblock {\em {Contributions à l'étude des copules et à la modélisation de
  valeurs extrêmes bivariées}}.
\newblock Ph.\ D. thesis, Université Laval, Qu\'ebec, Canada.

\bibitem[\protect\citeauthoryear{Kiriliouk, Segers, and Tsukahara}{Kiriliouk
  et~al.}{2021}]{Kiriliouk2021}
Kiriliouk, A., J.~Segers, and H.~Tsukahara (2021).
\newblock Resampling procedures with empirical beta copulas.
\newblock In {\em Pioneering Works on Extreme Value Theory: In Honor of Masaaki
  Sibuya}, pp.\  27--53. Springer.

\bibitem[\protect\citeauthoryear{Kojadinovic}{Kojadinovic}{2022}]{Kojadinovic2022stute}
Kojadinovic, I. (2022).
\newblock On {S}tute's representation for a class of smooth, possibly
  data-adaptive empirical copula processes.
\newblock arXiv preprint arXiv:2204.11240.

\bibitem[\protect\citeauthoryear{Kojadinovic and Yi}{Kojadinovic and
  Yi}{2021}]{Kojadinovic2022smooth}
Kojadinovic, I. and B.~Yi (2021).
\newblock A class of smooth, possibly data-adaptive nonparametric copula
  estimators containing the empirical beta copula.
\newblock arXiv preprint arXiv:2106.10726.

\bibitem[\protect\citeauthoryear{Leblanc}{Leblanc}{2012}]{Leblanc2012b}
Leblanc, A. (2012).
\newblock On the boundary properties of {B}ernstein polynomial estimators of
  density and distribution functions.
\newblock {\em Journal of Statistical Planning and Inference\/}~{\em
  142\/}(10), 2762--2778.

\bibitem[\protect\citeauthoryear{Liebscher}{Liebscher}{2008}]{Liebscher2008}
Liebscher, E. (2008).
\newblock Construction of asymmetric multivariate copulas.
\newblock {\em Journal of Multivariate Analysis\/}~{\em 99\/}(10), 2234--2250.

\bibitem[\protect\citeauthoryear{Mangold}{Mangold}{2017}]{Mangold2017}
Mangold, B. (2017).
\newblock New concepts of symmetry for copulas.
\newblock Technical report, FAU Discussion Papers in Economics.

\bibitem[\protect\citeauthoryear{Morettin, Toloi, Chiann, and
  de~Miranda}{Morettin et~al.}{2010}]{Morettin2010}
Morettin, P.~A., C.~M.~C. Toloi, C.~Chiann, and J.~C.~S. de~Miranda (2010).
\newblock Wavelet-smoothed empirical copula estimators.
\newblock {\em Revista Brasileira de Finan{\c{c}}as\/}~{\em 8\/}(3), 263--281.

\bibitem[\protect\citeauthoryear{Nelsen}{Nelsen}{1993}]{Nelsen1993}
Nelsen, R.~B. (1993).
\newblock Some concepts of bivariate symmetry.
\newblock {\em Journal of Nonparametric Statistics\/}~{\em 3\/}(1), 95--101.

\bibitem[\protect\citeauthoryear{Nelsen}{Nelsen}{2006}]{Nelsen2006}
Nelsen, R.~B. (2006).
\newblock {\em An Introduction to Copulas\/} (second ed.).
\newblock New York, NY, USA: Springer.

\bibitem[\protect\citeauthoryear{Nelson}{Nelson}{2007}]{Nelsen2007}
Nelson, R.~B. (2007).
\newblock Extremes of nonexchangeability.
\newblock {\em Statistical Papers\/}~{\em 48\/}(2), 329--336.

\bibitem[\protect\citeauthoryear{Omelka, Gijbels, and Veraverbeke}{Omelka
  et~al.}{2009}]{Omelka2009}
Omelka, M., I.~Gijbels, and N.~Veraverbeke (2009).
\newblock {Improved kernel estimation of copulas: Weak convergence and
  goodness-of-fit testing}.
\newblock {\em The Annals of Statistics\/}~{\em 37\/}(5B), 3023 -- 3058.

\bibitem[\protect\citeauthoryear{{R Core Team}}{{R Core Team}}{2023}]{R}
{R Core Team} (2023).
\newblock {\em R: A Language and Environment for Statistical Computing}.
\newblock Vienna, Austria: R Foundation for Statistical Computing.

\bibitem[\protect\citeauthoryear{R{\'e}millard and Scaillet}{R{\'e}millard and
  Scaillet}{2009}]{Remillard2009}
R{\'e}millard, B. and O.~Scaillet (2009).
\newblock Testing for equality between two copulas.
\newblock {\em Journal of Multivariate Analysis\/}~{\em 100\/}(3), 377--386.

\bibitem[\protect\citeauthoryear{R\"{u}schendorf}{R\"{u}schendorf}{1976}]{Ruschendorf1976}
R\"{u}schendorf, L. (1976).
\newblock Asymptotic distributions of multivariate rank order statistics.
\newblock {\em The Annals of Statistics\/}~{\em 4\/}(5), 912--923.

\bibitem[\protect\citeauthoryear{Sancetta and Satchell}{Sancetta and
  Satchell}{2004}]{Sancetta2004}
Sancetta, A. and S.~Satchell (2004).
\newblock The {B}ernstein copula and its applications to modeling and
  approximations of multivariate distributions.
\newblock {\em Econometric Theory\/}~{\em 20\/}(3), 535--562.

\bibitem[\protect\citeauthoryear{Segers}{Segers}{2012}]{Segers2012}
Segers, J. (2012).
\newblock Asymptotics of empirical copula processes under non-restrictive
  smoothness assumptions.
\newblock {\em Bernoulli\/}~{\em 18\/}(3), 764--782.

\bibitem[\protect\citeauthoryear{Segers, Sibuya, and Tsukahara}{Segers
  et~al.}{2017}]{Segers2017}
Segers, J., M.~Sibuya, and H.~Tsukahara (2017).
\newblock The empirical beta copula.
\newblock {\em Journal of Multivariate Analysis\/}~{\em 155}, 35--51.

\bibitem[\protect\citeauthoryear{Sklar}{Sklar}{1959}]{Sklar1959}
Sklar, A. (1959).
\newblock Fonctions de r\'epartition \`a n dimensions et leurs marges.
\newblock {\em Publications de l'Institut de Statistique de l'Universit\'e de
  Paris\/}~{\em 8}, 229--231.

\bibitem[\protect\citeauthoryear{van~der Vaart and Wellner}{van~der Vaart and
  Wellner}{1996}]{Van_der_vaart1996}
van~der Vaart, A.~W. and J.~A. Wellner (1996).
\newblock {\em Weak Convergence and Empirical Processes: With Applications to
  Statistics}.
\newblock New York, NY, USA: Springer.

\bibitem[\protect\citeauthoryear{Wu}{Wu}{2014}]{Wu2014}
Wu, S. (2014).
\newblock Construction of asymmetric copulas and its application in
  two-dimensional reliability modelling.
\newblock {\em European Journal of Operational Research\/}~{\em 238\/}(2),
  476–485.

\bibitem[\protect\citeauthoryear{Zhang, Kim, Beer, Dai, and Soares}{Zhang
  et~al.}{2018}]{Zhang2018}
Zhang, Y., C.-W. Kim, M.~Beer, H.~Dai, and C.~G. Soares (2018).
\newblock Modeling multivariate ocean data using asymmetric copulas.
\newblock {\em Coastal Engineering\/}~{\em 135}, 91--111.

\end{thebibliography}
\end{document}